\documentclass[11pt,reqno]{amsart}
\usepackage{mathrsfs,amsmath,amsxtra,amssymb,amsthm,amsfonts}
\usepackage{mathtools}

\DeclarePairedDelimiter\floor{\lfloor}{\rfloor}

\renewcommand{\for}{\begin{eqnarray*}}
\newcommand{\mel}{\end{eqnarray*}}
\def\fr{\begin{align*}}

\newcommand{\ten}{\otimes}

\newcommand{\pl}{\hspace{.1cm}}

\newcommand{\si}{\sigma}

\newcommand{\la}{\lambda}


\newcommand{\A}{{\mathcal A}}
\newcommand{\B}{{\mathcal B}}

\newcommand{\R}{{\mathcal R}}

\newcommand{\norm}[2]{\parallel \! #1 \! \parallel_{#2}}

\setcounter{footnote}{2}
\newtheorem{lemma}{Lemma}[section]
\newtheorem{prop}[lemma]{Proposition}
\newtheorem{theorem}[lemma]{Theorem}
\newtheorem{cor}[lemma]{Corollary}

\newtheorem{rem}[lemma]{Remark}
\newtheorem{definition}[lemma]{Definition}

\newcommand{\re}{\begin{rem}\rm}
\newcommand{\mar}{\end{rem}}

\newcommand{\bra}[1]{\langle{#1}|}
\newcommand{\ket}[1]{|{#1}\rangle}
\newcommand{\ketbra}[1]{|{#1}\rangle\langle{#1}|}
\newcommand{\qd}{\end{proof}\vspace{0.5ex}}
\newcommand{\prf}{\begin{proof}[\bf Proof:]}

\newcommand{\xspace}{\hbox{\kern-2.5pt}}

\newtheorem*{theorem*}{Theorem}

\allowdisplaybreaks

\oddsidemargin0cm
\evensidemargin0cm
\textwidth16.1cm

\setlength{\parindent}{4ex}


\hyphenation{arith-metic}

\usepackage{verbatim}
\usepackage{amsmath}
\usepackage{amsthm}
\usepackage{listings}
\usepackage{graphicx}
\usepackage{float}
\usepackage{amsfonts}
\usepackage{latexsym}
\usepackage{color}
\usepackage{url}
\usepackage{hyperref}
\usepackage{setspace}
\usepackage[normalem]{ulem}
\usepackage{braket}
\usepackage[toc,page]{appendix}
\usepackage{cancel}

\title{Heralded Channel Holevo Superadditivity Bounds from Entanglement Monogamy}

\author{L. Gao}
\address{Department of Mathematics\\
University of Illinois, Urbana, IL 61801, USA} \email[Li Gao]{ligao3@illinois.edu}

\author[M. Junge]{M. Junge$^*$}\thanks{$^*$ Partially supported by NSF-DMS 1501103}
\address{Department of Mathematics\\
University of Illinois, Urbana, IL 61801, USA} \email[Marius
Junge]{mjunge@illinois.edu}

\author[N. LaRacuente]{N. LaRacuente$^{\dag}$}\thanks{$^{\dag}$ This material is based upon work supported by NSF Graduate Research Fellowship Program DGE-1144245}
\address{Department of Physics\\
University of Illinois, Urbana, IL 61801, USA} \email[Nicholas LaRacuente]{laracue2@illinois.edu}

\begin{document}

    \newenvironment{remark}[1][Remark]{\begin{trivlist}
    \item[\hskip \labelsep {\bfseries #1}]}{\end{trivlist}}
    \newenvironment{conjecture}[1][Conjecture]{\begin{trivlist}
    \item[\hskip \labelsep {\bfseries #1}]}{\end{trivlist}}
    \newenvironment{corollary}[1][Corollary]{\begin{trivlist}
    \item[\hskip \labelsep {\bfseries #1}]}{\end{trivlist}}

	\newcommand{\partdiff}[2]{\frac{\partial #1}{\partial #2}}
	
	\newcommand{\raising}[2]{\hat{#1}^{\dag #2}}
	\newcommand{\dual}[2]{#1^{* #2}}
	\newcommand{\fock}{\mathcal{F}}
	
	\newcommand{\hilbert}{\mathcal{H}}
	\newcommand{\banach}{l}
	\newcommand{\bigphi}{\boldsymbol{\Phi}}
	\newcommand{\bigpsi}{\boldsymbol{\Psi}}
	\newcommand{\sep}{\text{SEP}}
	\newcommand{\one}{1}
	\newcommand{\expect}{\mathbb{E}}
	\makeatletter
	\newcommand{\vast}{\bBigg@{4}}
	\newcommand{\Vast}{\bBigg@{5}}
	\makeatother
	\newcommand{\flambda}{\underline{\lambda}}

	\newcommand{\canchan}{\Theta}
	\newcommand{\canB}{C}

\begin{abstract}
We show that for a particular class of quantum channels, which we call heralded channels, monogamy of squashed entanglement limits the superadditivity of Holevo capacity. Heralded channels provide a means to understand the quantum erasure channel composed with an arbitrary other quantum channel, as well as common situations in experimental quantum information that involve frequent loss of qubits or failure of trials. We also show how entanglement monogamy applies to non-classicality in quantum games, and we consider how faithful, monogamous entanglement measures may bound other entanglement-dependent quantities in many-party scenarios.
\end{abstract}

\maketitle

\section{Introduction}
\label{sec:intro}
The aim of Quantum Shannon theory is to adapt the idea of  Shannon's  seminal 1948 paper \cite{shannon_mathematical_2001} to the realm of quantum mechanics. Shannon's program has been applied to different notions of capacity for quantum channels (see e.g. \cite{wilde}), including transmission of qubits, classical information transmission with quantum encoding, and transmission with physically ensured security. A challenge and opportunity of Quantum Shannon theory lies in the phenomenon of \textit{superadditivity}, in which the capacity to transmit information over a collection of quantum channels may exceed the sum of their individual capacities \cite{hastings_superadditivity_2009,smith_quantum_2008,cubitt_unbounded_2015,li_private_2009}. For classical channels, the Shannon capacity is additive, allowing a \textit{single-letter} entropy expression -- a formula defined for one use of the channel. In contrast, so far there is no such single-letter expression for the classical capacity of a quantum channel, and an open question as to the general tractability of capacity calculations \cite{shor_quantum_2009}. By the HSW (Holevo, Schumacher \& Westmoreland) theorem \cite{holevo_capacity_1998,schumacher_sending_1997}, the classical capacity $C(\Phi)$ of a quantum channel $\Phi$ is given by \textit{regularization} of \textit{Holevo information} $\chi(\Phi)$, which maximizes the entropy expression for $m$ copies of the channel at once, taking $m$ to infinity.
 
In many laboratory settings, we expect that not every run of a protocol will succeed. This is a common paradigm in linear optics: photon sources \cite{tomamichel_tight_2012}, protocols \cite{pisenti_distinguishability_2011, calsamiglia_maximum_2014} and gates \cite{knill_scheme_2001} all have a probability or amplitude of failing in such a way that we might only discover after a measurement, or at the end of the experiment. Many of these situations can be described by \textit{quantum erasure channels}, which have some probability of outputting a flagged error state. Sometimes the probabilistic failure may be composed with successful but imperfect transmission. Keeping with the experimental and photonics literature, we call these \textit{heralded channels} \cite{northup_quantum_2014, hofmann_heralded_2012, bernien_heralded_2013, usmani_heralded_2012, maunz_heralded_2009}. These are a special case of flagged quantum channels as discussed in \cite{cubitt_unbounded_2015, elkouss_nonconvexity_2016, fuchs_nonorthogonal_1997}, in which the classical output is alerted to the outcome of some non-deterministic channel selection process. A flagged channel randomly selects among a fixed ensemble of quantum channels to apply to its input states, and reports which was applied along with the output state. Some cases of these channels are mentioned in \cite{marvian_symmetry_2012,brandao_when_2012,smith_quantum_2008}. Our notion of the flagged channel differs from that of the compound channels discussed in \cite{bjelakovic_quantum_2008, bjelakovic_classical_2007,  bjelakovic_arbitrarily_2013}, as they consider an unknown channel taken from a known set that is applied consistently, while we consider a new channel that draws randomly from a set of known channels at each use, but which reveals the result of this random drawing at the output.
\begin{figure}[H]
\centering
\includegraphics[width=.6\textwidth]{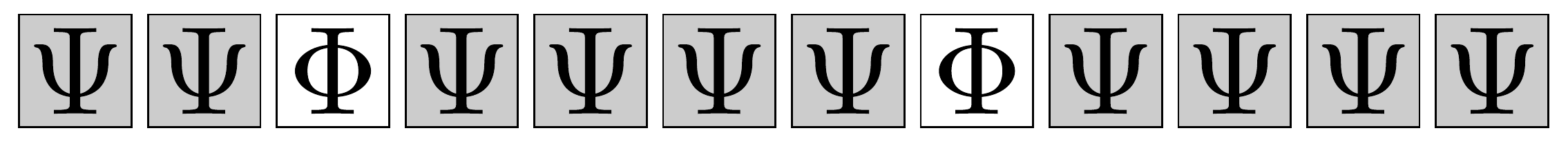}
\caption{Diagram of outcomes of a flagged channel that sometimes applies $\Phi$ and other times applies $\Psi$. We do not know for each turn which will be applied at preparation time, only at the output.}
\label{fig:her1}
\end{figure}
In these practical situations, the potential contribution of superadditivity is limited. Superadditivity relies on entangling input states to send over multiple uses of the channel. Entanglement as a quantum phenomenon is monogamous \cite{coffman_2000_distributed, monogamy, osborne_2006_general}. Thus unpredictably assigning nearly additive channels to many of the inputs dilutes the available entanglement for superadditivity of the desired ones. For example, even if one could efficiently prepare a large number of entangled photons at the input to a set of optic quantum channels, losing many of these photons means the entanglement after transmission is small. In this situation, an entangled scheme has limited or no advantage over a probably simpler non-entangled input, because entanglement monogamy limits the average entanglement between potentially superadditive outputs. Based on above considerations, we formulate the following conjecture:
\begin{conjecture}
\textit{A flagged channel that with high probability applies a strongly additive channel is nearly additive.}
\label{conj:hadd}
\end{conjecture}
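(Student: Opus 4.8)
The plan is to use the flag register to linearize the capacity expression and only then bring in entanglement monogamy to control the residual superadditive term. Write the heralded channel as $\Lambda(\rho)=(1-\eps)\,\ketbra{0}_F\otimes\Phi_0(\rho)+\eps\,\ketbra{1}_F\otimes\Phi_1(\rho)$, with $\Phi_0$ the strongly additive branch and $\Phi_1$ an arbitrary channel of output dimension $d_1$; a multi-outcome flagged channel is handled identically after grouping every branch other than the strongly additive one into a single ``failure'' channel. In $\Lambda^{\otimes n}$ the flag word $f^n$ is drawn i.i.d.\ and independently of the input, partitioning $[n]$ into a random success set $S=S(f^n)$ with $|S|\sim\mathrm{Bin}(n,1-\eps)$ and its complement $T=[n]\setminus S$. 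For a classical-quantum input $\sum_x p_x\ketbra{x}\otimes\rho_x^{A^n}$, independence of $F^n$ from $X$ gives $I(X:F^n\tilde B^n)=I(X:\tilde B^n\mid F^n)=\expect_{f^n}\chi\big(\{p_x,\bigotimes_i\Phi_{f_i}(\rho_x)\}\big)$; maximizing over the input and pulling the maximization inside the expectation yields
\[
\chi(\Lambda^{\otimes n})\ \le\ \expect_{|S|}\ \chi\big(\Phi_0^{\otimes|S|}\otimes\Phi_1^{\otimes|T|}\big).
\]

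The next step is to estimate each $\chi(\Phi_0^{\otimes|S|}\otimes\Phi_1^{\otimes|T|})$. If $\Phi_0$ is strongly additive in the strict sense that $\chi(\Phi_0\otimes\Omega)=\chi(\Phi_0)+\chi(\Omega)$ for every channel $\Omega$, this term equals $|S|\,\chi(\Phi_0)+\chi(\Phi_1^{\otimes|T|})$; combining $\chi(\Phi_1^{\otimes|T|})\le|T|\,C_E(\Phi_1)$ (the entanglement-assisted capacity is additive) with $\expect|T|=\eps n$ and concentration of the binomial $|S|$ gives $\chi(\Lambda^{\otimes n})\le(1-\eps)n\,\chi(\Phi_0)+\eps n\,C_E(\Phi_1)+o(n)$, hence $0\le C(\Lambda)-\chi(\Lambda)=O(\eps\log d_1)$, the lower bound $\chi(\Lambda)\ge(1-\eps)\chi(\Phi_0)$ coming from the $\Phi_0$-optimal ensemble with the failure branch discarded. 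This already settles the conjecture, and monogamy is what makes the statement sharp and robust: in the genuinely interesting regime, where $\Phi_0$ is only weakly additive, or where the hypothesis is collective additivity of a large ensemble of ``mostly good'' branches, the splitting above is unavailable and the excess of $I(X:B^S)$ over $|S|\,\chi(\Phi_0)$ must be absorbed into an entanglement quantity. Here I would purify the optimal average input on $A^SA^TR$, bound this excess by a constant multiple of $\log d_0$ times the squashed entanglement $E_{\mathrm{sq}}(A^S:A^TR)$ of the conditional input states, via a Koashi--Winter-type identity together with faithfulness of $E_{\mathrm{sq}}$, and then invoke monogamy, $E_{\mathrm{sq}}(A:B)+E_{\mathrm{sq}}(A:C)\le E_{\mathrm{sq}}(A:BC)$, across the $|S|$ good copies to cap the sum of their individual entanglements by the $\log$-dimension of the shared resource $A^TR$, which is $O(|T|)=O(\eps n)$ after optimizing the reference. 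Averaging over $|S|$ again leaves an $O(\eps n)$ correction, and the conclusion follows as before.

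I expect this second step to be the main obstacle. Holevo information is a mutual-information quantity rather than a genuine entanglement measure, so the real content is the passage from ``superadditive excess over the additive value'' to ``faithful, monogamous entanglement among the good copies''; the cheap route --- dominate the excess by $E_{\mathrm{sq}}$ and pay a dimension factor for faithfulness --- is precisely what introduces the $\log d$ losses and is certainly not optimal, so the thing to look for is a direct entropic inequality between the two. The remaining points --- controlling the size of the reference $R$ by purification, making the $|S|\approx(1-\eps)n$ concentration rigorous, and using $C_E$ rather than $\chi$ to bound $\chi(\Phi_1^{\otimes|T|})$ when the failure channel is itself badly superadditive --- are routine, but need to be carried out so that the final error genuinely tends to $0$ as $\eps\to0$.
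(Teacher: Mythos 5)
Your first argument is correct as far as it goes, and it is a genuinely different, more elementary route than the paper's: conditioning on the flag word to get $\chi(\Lambda^{\otimes n})\le \mathbb{E}\,\chi\big(\Phi_0^{\otimes|S|}\otimes\Phi_1^{\otimes|T|}\big)$, splitting off the $|S|$ strongly additive factors exactly, and bounding the residue by $|T|\,C_E(\Phi_1)$ yields $C(\Lambda)-\chi(\Lambda)\le\varepsilon\,C_E(\Phi_1)=O(\varepsilon\log d_1)$ with no entanglement monogamy at all. But this is precisely the bound the paper sets out to beat: the authors state that convexity of Holevo information already gives a correction linear in the success probability, and that the point of Theorem \ref{thm:introer} (proved through Lemma \ref{thm:entdil}, Corollary \ref{cor:holevoadd} and Corollary \ref{cor:erasercap}) is a correction of order $\lambda^{5/4}\log(1/\lambda)$, i.e.\ $o(\lambda)$. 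The distinction bites exactly in the paper's central case, the generalized erasure channel $Z_\lambda(\Phi)$, where the high-probability branch is trivial: there $\chi(Z_\lambda(\Phi))=\lambda\chi(\Phi)$ is itself $\Theta(\lambda)$, so your correction $\lambda\big(C_E(\Phi)-\chi(\Phi)\big)$ is of the same order as the capacity and the relative additivity violation does not vanish, whereas the paper's does. So your first step settles only a weak reading of ``nearly additive,'' not the quantitative statement the paper proves.

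The genuine gap is in your second step, which is where the paper's argument actually lives, and your sketch points monogamy in the wrong direction. The paper does not need monogamy to control the many strongly additive branches --- those split off for free, as you observed. It needs it to control superadditivity \emph{among the few arbitrary branches}, i.e.\ your term $\chi(\Phi_1^{\otimes|T|})$, which you simply surrender to a dimension bound. The mechanism (Lemma \ref{thm:entdil}) exploits that the successes sit at \emph{random} positions: after symmetrizing over permutations one covers $[n]$ by $L=\lfloor n/k\rfloor$ disjoint candidate success sets, and monogamy of squashed entanglement across these $L$ blocks forces $E_{sq}(B_0,Z_k(\bigphi))_\rho\le S(B_0)_\rho/L$ for any reference $B_0$; faithfulness and Alicki--Fannes then give a correction of order $d\,\lambda^{1/4}$ per peeled-off nontrivial channel, and peeling the $k=\lambda n$ nontrivial channels one at a time (Corollary \ref{cor:holevoadd}) gives the total $O(n\lambda^{5/4})$. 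Your version --- capping $\sum_{i\in S}E_{sq}(A_i;A^TR)$ for the $(1-\varepsilon)n$ good copies by the $O(\varepsilon n)$-dimensional resource $A^TR$ --- also does not deliver the $O(\varepsilon n)$ total you claim: faithfulness charges a fourth root per copy, so by H\"older the sum of the per-copy penalties is $O\big(((1-\varepsilon)n)^{3/4}(\varepsilon n)^{1/4}\big)=O(n\varepsilon^{1/4})$, which swamps the leading term. The paper avoids this by paying the fourth-root penalty only $k=\lambda n$ times, not $(1-\varepsilon)n$ times.
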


Here the phrase ``strongly additive" means that it has no superadditive phenomenon when paired with any other channel. We quantify this conjecture for classical capacity, using the monogamy of squashed entanglement and its faithfulness with the trace distance \cite{li_squashed_2014}. Our results suggest one circumstance in which we can avoid the full difficulty of regularization. Surprisingly, post-selecting on success may not produce an exactly equivalent situation to having a channel without the failure probability, as it limits the phenomenon of superadditivity. We also show that with limited \textit{entangled blocksize}, the number of inputs that are not separable in a quantum code, a wide variety of quantities become nearly additive for heralded channels.

In addition to the experimental motivations for understanding heralding, we derive a new technique of fundamental interest in estimating quantum channel capacity and other entropy expressions. Instead of relying on the details of the channel, we show that entanglement monogamy limits the contribution of superadditivity when considering a small subset of outputs randomly chosen from a larger set. Going beyond the case of capacity and based on the faithfulness of squashed entanglement \cite{christandl_squashed_2004}, any entanglement-dependent quantity that is faithful in the trace distance should also be faithful in the squashed entanglement, hence admitting a bound from entanglement monogamy.
\subsection{Main contributions}
Let $\Phi$ be a quantum channel. That is, a completely positive trace preserving map which sends density operators (positive and trace one) of one Hilbert space $H_{A}$ to another $H_B$. The HSW theorem \cite{holevo_capacity_1998,schumacher_sending_1997} proves that the classical capacity of a quantum channel $C(\Phi)$ is given by \textit{regularization} of its \emph{Holevo information} $\chi(\Phi)$ as follows,
\begin{equation*}
\chi(\Phi) := \sup_{\rho^{XA}} I(X;B)_{\si} = \sup_{\rho^{XA}} S(B)_\si - \sum_x p(x) S(B)_{\si_x} \pl , \pl C(\Phi) = \frac{1}{m}\limsup_{m \rightarrow \infty} \chi(\Phi^{\otimes m}).
\label{eq:classcap}
\end{equation*}
where $S(\si)=-tr(\si\log \si)$ is the von Neumann entropy and $I(X;B)= S(X)+S(B)-S(XB)$ is the mutual information. The supremum runs over all classical-quantum inputs \[\rho^{XA}={\sum_{x}p(x)\ket{x}\bra{x}\ten \rho_x^{A}}\] and $\si=\Phi(\rho),\si_x=\Phi(\rho_x)$ are the outputs.

We define for $\la\in [0,1]$ the generalized quantum erasure channel with $1-\la$ erasure probability,
\begin{equation}
Z_\lambda(\Phi)(\rho) = \lambda \ket{0}\bra{0}^Y \otimes \Phi(\rho) + (1 - \lambda) \ket{1}\bra{1}^Y \otimes \si=\left[\begin{array}{cc}\lambda\Phi(\rho) & \\ & (1 - \lambda) \si\pl.
\end{array}\right]
\label{eq:eraserin}
\end{equation}
where $Y$ is the heralding/flagging classical output, and $\si$ is some fixed state. Using monogamy and faithfulness of squashed entanglement, we have the following estimate of its classical capacity:
\begin{theorem}
\label{thm:introer}
For any quantum channel $\Phi$ with output dimension at most $d$,
\begin{equation}
C(Z_\lambda(\Phi)) \leq \lambda  \chi(\Phi) + O( d(\log d)^{\frac{1}{4}} \la^{\frac{5}{4}} \log \lambda ) \pl.
\end{equation}
\end{theorem}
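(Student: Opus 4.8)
The plan is to estimate $C(Z_\lambda(\Phi)) = \lim_m \tfrac1m\chi(Z_\lambda(\Phi)^{\otimes m})$ directly, i.e.\ to bound $\chi(Z_\lambda(\Phi)^{\otimes m})$ by $m\lambda\chi(\Phi) + m\cdot O\bigl(d(\log d)^{1/4}\lambda^{5/4}\log\lambda\bigr)$ uniformly in $m$. The first point is that the heralding flag in \eqref{eq:eraserin} is a fixed coin independent of the input: for a classical--quantum input $\rho^{XA_1\cdots A_m}$ the output flags $Y_1,\dots,Y_m$ are independent of $X$, so conditioning on the erasure pattern, equivalently on the random success set $S\su\{1,\dots,m\}$ in which each index is present independently with probability $\lambda$, gives
\[
\chi(Z_\lambda(\Phi)^{\otimes m}) \;=\; \sup_{\rho^{XA_1\cdots A_m}}\ \expect_S\, I(X;B_{S})_{\Phi^{\otimes S}(\rho_S)}\ ,
\]
where $B_S$ is the output of $\Phi^{\otimes S}$ on the $S$-marginal of $\rho$. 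In particular $\chi(Z_\lambda(\Phi)) = \lambda\chi(\Phi)$, and since $I(X;B_i)\le\chi(\Phi)$ on each single-copy marginal, the task reduces to bounding, uniformly in $\rho$ and $m$, the average superadditive excess $\expect_S\bigl[I(X;B_S) - |S|\,\chi(\Phi)\bigr]$ by $m\cdot O(d(\log d)^{1/4}\lambda^{5/4}\log\lambda)$.

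Second, I would show this excess is ``faithful in the trace distance'': it is small whenever the output on the surviving tensor factors is close to a state that is separable across the copy cuts, since $\chi$ is exactly additive on separable inputs. Concretely, peel the copies in $S$ off one at a time using the chain rule $I(X;B_S)=\sum_{i\in S} I(X;B_i\mid B_{S,<i})$ together with the identity $I(X;B_i\mid B')-I(X;B_i) = I(B_i;B'\mid X)-I(B_i;B')$; at each step only the fixed-dimensional output $B_i$ enters, so the Alicki--Fannes--Winter continuity bound for the Holevo information costs at most (trace distance of $\rho_{B_iB_{S,<i}}$ to $\sep$)$\,\cdot\log d$ plus a binary-entropy term, rather than a loss proportional to $|S|$. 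This sequential peeling is what keeps the dimension and continuity losses from compounding with $m$.

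Third, ``faithful in the trace distance'' upgrades to ``faithful in squashed entanglement'' via the faithfulness estimates of \cite{li_squashed_2014,christandl_squashed_2004}: a bipartite state with squashed entanglement $\eps$ is within trace distance $\mathrm{poly}(d)\sqrt{\eps}$ of a separable state. It remains to control the relevant squashed-entanglement quantities of $\rho_S$ on average over $S$, and this is where entanglement monogamy enters: monogamy of squashed entanglement bounds the total entanglement any one copy can share with the others by $S(B_i\mid X)\le\log d$, while a fixed pair of copies survives only with probability $\lambda^2$, so heuristically the expected total squashed entanglement among the survivors should be $O(\lambda^2 m\log d)$. Feeding this through the faithfulness bound and the continuity estimate, summing over $i\in S$ with Cauchy--Schwarz and Jensen, and optimizing a cutoff that separates ``nearly separable'' from ``highly entangled'' copies produces the estimate in the stated form; the fractional powers $\lambda^{5/4}$ and $(\log d)^{1/4}$ are what this optimization returns, and the $\log\lambda$ traces to the binary-entropy term in the continuity estimate.

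The main obstacle is the monogamy bookkeeping in the third step. The superadditivity bound involves \emph{block} quantities $E_{sq}(B_i;B_{S,<i}\mid X)$, and block squashed entanglement can far exceed the sum of the pairwise terms $\sum_j E_{sq}(B_i;B_j\mid X)$ --- for a GHZ-type input the pairwise terms all vanish while the block term does not --- so one cannot simply invoke pairwise monogamy. The real content is a monogamy-type inequality, compatible with the conditioning on $X$ and with the sequential peeling, that still forces the \emph{average over the random survival pattern} of these block quantities to be $O(\lambda^2\log d)$ per copy: i.e.\ that the adversary cannot arrange its input so that a random $\lambda$-fraction of the tensor factors stays strongly entangled among themselves. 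Establishing that averaged block-monogamy bound, and tracking the error terms so the final per-copy estimate has the claimed dependence on $d$ and $\lambda$, is the technical heart of the proof; everything else is assembling HSW regularization, the squashed-entanglement monogamy and faithfulness inequalities of \cite{li_squashed_2014,christandl_squashed_2004}, and Alicki--Fannes--Winter continuity.
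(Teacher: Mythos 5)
You have the right outer architecture (regularization, conditioning on the heralding pattern, faithfulness of squashed entanglement, Alicki--Fannes continuity), and your exponent bookkeeping ($\lambda\cdot\lambda^{1/4}=\lambda^{5/4}$ and the $(\log d)^{1/4}$ from the fourth root in the faithfulness estimate) matches what the paper obtains. But the step you yourself flag as ``the technical heart'' --- an averaged block-monogamy bound on $E_{sq}(B_i;B_{S,<i}\mid X)$ over the random survival pattern --- is a genuine gap: your proposal does not contain it, and the pairwise-monogamy route you consider cannot supply it, for exactly the GHZ-type reason you give. The paper closes this gap by orienting the monogamy argument differently, so that block-versus-pairwise issues never arise. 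In Lemma~\ref{thm:entdil} one bounds the squashed entanglement between a \emph{single fixed small system} $B_0$ and the \emph{entire} output of the fixed-success-number heralded channel $Z_k(\Phi_1,\dots,\Phi_n)$: the channel is rewritten as a uniform average over permutations $\sigma\in S_n$ and over $L=\lfloor n/k\rfloor$ \emph{disjoint} $k$-blocks $R_1,\dots,R_L$; convexity of $E_{sq}$ reduces the problem to averaging $E_{sq}(B_0,A^{\sigma(R_l)})$ over $l$; and monogamy of $B_0$ against the $L$ disjoint parties $A^{\sigma(R_1)},\dots,A^{\sigma(R_L)}$ gives $\sum_l E_{sq}(B_0,A^{\sigma(R_l)})\le S(B_0)\le\log d$, hence at most $\tfrac1L\log d\approx\lambda\log d$ on average. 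One side of every monogamy application is always the one small system currently being peeled off, never a block of survivors against another block of survivors. The peeling is then performed at the channel level via $Z_k(\Phi_1,\dots,\Phi_n)=\tfrac1n\sum_j\Phi_j\otimes Z_{k-1}(\dots)$, convexity of Holevo information in the channel, the approximate strong additivity of Theorem~\ref{thm:holevostrongadd}, and Lemma~\ref{lem:holevopart} (Holevo information does not increase under separable conditioning), applied to the nearby separable state produced by faithfulness; Alicki--Fannes is only ever invoked for conditional entropies $S(B_0\mid\cdot)$ of the small system, so the continuity loss depends on $d$ alone.

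A second, more minor repair: you work directly with the i.i.d.\ (binomial) erasure pattern, but the disjoint-block partition needs a \emph{fixed} number of successes. The paper inserts Theorem~\ref{thm:eraserherald}, comparing $Z_\lambda(\Phi)^{\otimes n}$ with $Z^n_{\lfloor\lambda n\rfloor}(\Phi)$ at a cost of $\bigl(1+\sqrt{n\lambda(1-\lambda)}\bigr)\chi^{(pot)}(\Phi)$, which is $o(n)$ and disappears after regularization; this is where the potential Holevo capacity enters, via Lemma~\ref{lem:potcapbound}. With that reduction in place, Corollary~\ref{cor:holevoadd} applied to $Z^n_{\lfloor\lambda n\rfloor}(\Phi)^{\otimes m}$ yields a per-success correction of $6.2\,d\,(\lambda\log d)^{1/4}\log(\cdot)$, and multiplying by the success fraction $\lambda$ gives the stated $O\bigl(d(\log d)^{1/4}\lambda^{5/4}\log\lambda\bigr)$. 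In short: your plan names the correct ingredients, but its central lemma is missing, and the missing idea is precisely the disjoint-block permutation averaging of Lemma~\ref{thm:entdil}.
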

This result is a consequence of the following setting. Let $\Phi_1,\cdots,\Phi_n$ be a family of quantum channels. Given $k\le n$, we consider a \emph{heralded channel} with fixed success number $Z_k(\Phi_1,\cdots,\Phi_n)$ that randomly selects $k$ positions from $\{1,\cdots,n\}$, applying $\Phi$ channels to those positions and trivial channels $\Theta$ with fixed output to the rest. In addition, it provides an extra classical signal describing the chosen $k$ positions. Denote the integer set as $[n]=\{1,\cdots,n\}$. Mathematically, for an input state $\rho$ this is
\begin{equation}\label{A}
Z_{k} (\Phi_1, ..., \Phi_n)(\rho) =  \frac{1}{\binom{n}{k}} \sum_{R \subseteq [n], |R| = k} (\Phi^{R} \otimes \canchan^{R^c}) (\rho) \otimes \ketbra{R}^Y\pl,
\end{equation}
where the sum is for all $k$-subsets of positions $R\subseteq \{1,\cdots,n\}$, $R^c$ is the complement of $R$, and $\ketbra{R}^Y$ is the classical heralding signal. We use the notation $\Phi^{R} \otimes \canchan^{R^c}$ for the tensor channel which has $\Phi$ for the positions of $R$ and $\canchan$ in the complement $R^c$, and where the $\ketbra{R}^Y$ term is the direct output to auxiliary subsystem $Y$ (independent of input). This extra, classical output creates a block diagonal output structure, generalizing that of the erasure channel \eqref{eq:eraserin}.
\begin{theorem}
\label{thm:introher}
Let $\Phi_1,\cdots,\Phi_n$ be a family of channel with the dimension of the output system of each $\Phi$ is at most $d$. Then for $\lambda=\frac{k}{n}<1$,
\begin{equation}
\label{eq:intro1}
C(Z_k(\Phi_1, ..., \Phi_n)) \leq \lambda \sum_{j=1}^{n} \chi(\Phi_j)
	+ O \big ( k d(\log d)^{\frac{1}{4}} \la^{\frac{5}{4}} \log \lambda \big )\pl.
\end{equation}
\end{theorem}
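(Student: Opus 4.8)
The plan is to bound the $m$-copy Holevo quantity $\chi\big(Z_k(\Phi_1,\dots,\Phi_n)^{\otimes m}\big)$ by $m\lambda\sum_j\chi(\Phi_j)$ plus an error that is linear in $m$ with a good constant, then divide by $m$ and let $m\to\infty$. The channel $Z_k^{\otimes m}$ acts on $nm$ input positions grouped into $m$ blocks of $n$, and it emits a classical flag $Y=(R_1,\dots,R_m)$ recording the active $k$-subset chosen in each block. Since the flag is produced independently of the input, for any classical--quantum input $\rho^{XA}$ with output $\sigma$,
\[
I(X;BY)_\sigma \;=\; I(X;Y)_\sigma + I(X;B\mid Y)_\sigma \;=\; \mathbb{E}_{\vec R}\, I\big(X;B_{\mathrm{act}(\vec R)}\big)_\sigma ,
\]
where $B_{\mathrm{act}(\vec R)}$ collects the outputs of the active positions; the inactive positions carry the fixed output of $\canchan$ and decouple. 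Hence $\chi(Z_k^{\otimes m})=\sup_\rho\mathbb{E}_{\vec R}\,I(X;B_{\mathrm{act}(\vec R)})$.

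I would then split this into an additive part and a superadditivity correction. Ordering the active positions and applying the mutual-information chain rule, $I(X;B_{\mathrm{act}}) \le \sum_{(i,j)\in\mathrm{act}} I(X;B_{ij}) + C(\sigma_{\mathrm{act}}\mid X)$, where $C(\,\cdot\mid X)$ is the conditional total correlation of the active outputs. Averaging over $\vec R$: each position $(i,j)$ is active with probability $\lambda=k/n$ and $I(X;B_{ij})\le\chi(\Phi_j)$, so $\mathbb{E}_{\vec R}\sum_{(i,j)\in\mathrm{act}}I(X;B_{ij})\le m\lambda\sum_{j=1}^n\chi(\Phi_j)$. After dividing by $m$ this is exactly the main term of \eqref{eq:intro1}, and the whole problem reduces to
\[
\tfrac1m\sup_\rho\mathbb{E}_{\vec R}\,C\big(\sigma_{\mathrm{act}(\vec R)}\mid X\big)\;=\;O\big(k\,d(\log d)^{1/4}\lambda^{5/4}\log\lambda\big)\quad\text{uniformly in }m .
\]

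The correction is an entanglement-type quantity of the active region (by data processing it is dominated by the same correlation measure on the input, where the optimal signal states can be chosen so that additivity would be exact were that region separable), and here I would invoke monogamy of squashed entanglement: for a fixed signal letter the relevant state is $nm$-partite with local output dimension $\le d$, so the squashed entanglements of any one position with the others sum to at most $\log d$, and since the random active set meets each position only with probability $\lambda$, the active region carries, in expectation over $\vec R$, only a small entanglement budget per block. Faithfulness of squashed entanglement (the Li--Winter bound) turns a small budget into trace-norm closeness to separable states, which is peeled off position by position, and Alicki--Fannes continuity converts each trace-norm estimate into a bound on the corresponding increment of $C(\,\cdot\mid X)$; optimizing the continuity trade-off ($\eta$ versus $\eta\log\eta$) against the monogamy budget and the dimension $d$ produces the exponents $(\log d)^{1/4}$ and $\lambda^{5/4}$ and the $\log\lambda$ factor. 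The i.i.d.\ structure over the $m$ blocks keeps the whole estimate linear in $m$, so it survives the $1/m$ of regularization, yielding \eqref{eq:intro1}. Theorem \ref{thm:introer} then follows by specializing to $\Phi_1=\dots=\Phi_n=\Phi$ (so that $Z_k$ is the $k$-out-of-$n$ heralded version of $Z_\lambda(\Phi)$) and taking $n\to\infty$, or by running the same argument directly on $Z_\lambda(\Phi)^{\otimes m}$.

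The main obstacle is precisely this quantitative passage from monogamy to the correction. The correction is a conditional total correlation, which is \emph{not} itself a squashed entanglement, is not additive over the ``partners'' of a given output, and does not vanish merely because a state is separable; so one cannot dominate it by a sum of squashed entanglements, and the reduction to separability must be done carefully — either on the input side (where the local dimension is a priori unbounded, forcing one to push through the channel at the right moment) or via a refinement of the classical index — with the faithfulness and continuity steps interleaved so that the errors telescope, and with both the monogamy budget and the randomization (each output has few strongly entangled partners, and a random active set captures few of them) used to make the sum small. Threading Li--Winter and Alicki--Fannes so that the dimension dependence stays at $d(\log d)^{1/4}$ rather than degrading, and extracting the correct power of $\lambda$, is the delicate point.
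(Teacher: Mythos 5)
Your outline correctly identifies the ingredients the paper uses (monogamy of squashed entanglement, the Li--Winter faithfulness bound, Alicki--Fannes continuity, and the fact that the flag $Y$ reduces $\chi(Z_k^{\otimes m})$ to an average of mutual informations over the active region), and your derivation of the main term $\lambda\sum_j\chi(\Phi_j)$ is fine. But the entire difficulty is concentrated in the correction term, and there your chosen decomposition does not work — as you yourself concede. You reduce everything to bounding the conditional total correlation $C(\sigma_{\mathrm{act}}\mid X)$ of all active outputs. That quantity does not vanish on separable (or even classically correlated) states, is not controlled by any sum of bipartite squashed entanglements, and is not the kind of object the faithfulness$+$continuity machinery can estimate. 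Flagging this as ``the delicate point'' is not a proof; it is precisely the step where a different idea is needed.

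The paper's proof supplies two ideas you are missing. First, instead of one global correlation term, it peels off a single channel at a time: writing $Z_k(\Phi_1,\dots,\Phi_n)=\frac1n\sum_j \Phi_j\otimes Z_{k-1}(\dots)$ and using convexity of $\chi$ in the channel, it reduces to an approximate strong-additivity statement $\chi(\Phi_0\otimes\bigotimes_i Z_{k_i})\le\chi(\Phi_0)+\chi(\bigotimes_i Z_{k_i})+\varepsilon$ for a single auxiliary channel $\Phi_0$ with output dimension $\le d$ (Theorem \ref{thm:holevostrongadd}), applied $k$ times; this is what keeps the dimension factor at $d$ and makes the error linear in the number of peeling steps. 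Second, within that step the quantity that must be compared to its separable version is not a total correlation but the \emph{conditional Holevo} expression $S(B_0|B)_\omega-\sum_x p_xS(B_0|B)_{\omega_x}$, which for states separable across $B_0\!:\!B$ is bounded by the single-letter $\chi(\Phi_0)$ (Lemma \ref{lem:holevopart}, via encoding the separable decomposition as a cq-channel and data processing) — it is bounded by $\chi(\Phi_0)$, not by zero, which is exactly what a superadditivity bound needs. The monogamy input is also more structured than ``each position is active with probability $\lambda$'': one partitions $[n]$ into $L=\lfloor n/k\rfloor$ disjoint $k$-blocks and symmetrizes over permutations (Lemma \ref{thm:entdil}) to get $E_{sq}(B_0,Z_k(\bigphi))_\rho\le S(B_0)_\rho/L$, to which faithfulness and Alicki--Fannes are then applied with $|B_0|\le d$. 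Without these three steps (iterative peeling, the separable bound on the conditional Holevo quantity, and the disjoint-block monogamy estimate), your reduction to ``$\tfrac1m\sup_\rho\mathbb{E}_{\vec R}C(\sigma_{\mathrm{act}}\mid X)=O(\cdots)$'' cannot be closed.
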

The first term is single-letter, which is the average Holevo information of non-trivial channels according to the success probability $\lambda$. The second term bounds possible additivity violation. Note that using convexity of Holevo information in the input state \cite{gao_capacity_2016}, one may obtain linear correction term $O(\la)$. Here monogamy of entanglement narrows the additivity violation decay when the success probability $\la$ goes to $0$, without relying on the detailed form of $\Phi$'s.  The correction term decays faster than linearly, making it stronger than the linear correction term from convexity, as well as diamond norm comparison to channels with additive $\Phi_j$ or zero success probability \cite{leditzky_approaches_2017}.

Going beyond capacity, we construct a form of quantum game in which Alice prepares a joint state via free communication with many ``B" players simultaneously, though the dimension of her system is fixed and does not depend on the number of B players. After this initial state is prepared, the referee randomly selects one B player, and Alice plays a bipartite no-communication game with that B player. When there are many B players, monogamy of entanglement limits the extent to which the average value of these games with a quantum pre-shared state may exceed that with a classical pre-shared state. See \ref{sec:games} for the details of this setup.

We organize this work as follows: Section \ref{sec:prelim} introduces some notation and reviews channel capacity, squashed entanglement and continuity of entropies. In Section \ref{sec:heravg}, we explain our main technique, which we call ``heralded averaging." Section \ref{sec:it} gives the proof of Theorem \ref{thm:introher}. We use this in Section \ref{sec:related} to show Theorem \ref{thm:introer}. Section \ref{sec:other} discusses other applications of our principles to other capacities and quantum games. We end with discussion and conclusions in section \ref{sec:conclusion}.

\section{Preliminary}
\label{sec:prelim}
\subsection{Channel and capacity}

\label{sec:channeldef}
Let $\{\Phi_1, \cdots , \Phi_n\}$ and $\{\Psi_1, ... , \Psi_n\}$ be two classes of quantum channels. For $k<n$, we define the \emph{flagged switch channel} as follows,
\begin{align*}
Z_k(\Phi_1,\cdots, \Phi_n; \Psi_1,\cdots, \Psi_n) (\rho) =
	 \frac{1}{\binom{n}{k}} \sum_{R\subset [n],|R| = k} \Big (\Phi^{R} \otimes \Psi^{R^c}\Big ) (\rho) \otimes \ketbra{R}^{Y}\pl.
\end{align*}
This channel is a sum of tensor products of ``$\Phi$" and ``$\Psi$" channels, randomly selecting $k$ positions at which to apply the corresponding ``$\Phi$" channel, and applying $\Psi$ channels at all others. Here $R$ indexes the classical output $\ketbra{R}^{Y}$, representing which subset of the outputs that were $\Phi$ channels. We assume that the flagging process produces such a classical signal $Y$, such as by detection of a secondary photon at the output of a source \cite{tomamichel_tight_2012}, or the outcome of an attempted quantum experiment. That is by the time it reaches the output, the flagging signal has already been converted into a classical register $Y$ of value $R$, the states of which we denote $\ketbra{R}$. Note that we do not switch the positions of input channels, only choosing between alternatives at each position. If all $\Psi_j$ are erasure operations which have a fixed output $\Psi_j(\rho)=\Theta(\rho)= \si$ for some state $\si$, we call
\[Z_k(\Phi_1,\cdots, \Phi_n) := Z_k(\Phi_1,\cdots, \Phi_n ;\Theta,\cdots, \Theta)\]
the \emph{heralded channel}. This is because with a fixed output, any information transmission fails. To shorten our notation, in the following we will write
\[Z_k(\bigphi;\bigpsi):=Z_k(\Phi_1,\cdots, \Phi_n; \Psi_1,\cdots, \Psi_n)\pl, \pl Z_k(\bigphi):=Z_k(\Phi_1,\cdots, \Phi_n)\pl,\]
where the classes of channel $\bigphi=\{\Phi_1,\cdots,\Phi_n\}$ and $\bigpsi=\{\Psi_1,\cdots,\Psi_n\}$ are clear. Some cases of these channels are mentioned in \cite{marvian_symmetry_2012,brandao_when_2012,smith_quantum_2008}.

Our notion of the flagged channel differs from that of the compound channels discussed in \cite{bjelakovic_quantum_2008, bjelakovic_classical_2007,  bjelakovic_arbitrarily_2013}, as they consider an unknown channel taken from a known set that is applied consistently, while we consider a new channel that draws randomly from a set of known channels at each use, but which reveals the result of this random drawing at the output.

In addition to the classical capacity, we will also discuss the \textit{potential Holevo capacity} introduced in \cite{winter_potential_2016}. The \textit{potential Holevo capacity} is the maximum Holevo information gain of a channel when assisted with another channel,
\begin{align*}
\chi^{(pot)}(\Phi) = \max_{\Psi} ( \chi(\Phi \otimes \Psi) - \chi(\Psi) ) \pl.
\end{align*}
It is clear that $\chi^{(pot)}(\Phi)\geq C(\Phi) \geq \chi(\Phi)$, and we say that the channel's Holevo information is \emph{additive} if $C(\Phi) = \chi(\Phi)$; \emph{strongly additive} if $\chi^{(pot)}(\Phi) = \chi(\Phi)$. In both cases its classical capacity $C(\Phi)$ is fully characterized by the single-letter Holevo information $\chi(\Phi)$.
We refer to the book by Wilde \cite{wilde} for more information about quantum channel capacity and other basics in quantum Shannon theory.
\subsection{Squashed Entanglement}The squashed entanglement of a bipartite state $\rho^{AB}$, defined in \cite{christandl_squashed_2004}, is
\begin{align*}
E_{sq}(A,B)_{\rho} = \inf \{ \frac{1}{2} I(A;B|C)_\rho \pl | \pl tr_{C}(\rho^{ABC}) =\rho^{AB}  \}\pl,
\end{align*}
where the infimum runs over all extensions $\rho^{ABC}$ of $\rho^{AB}$. We will use the following properties of the squashed entanglement in this paper:
\begin{enumerate}
\item[i)] Convexity: let $\rho^{AB}= \sum_x p(x) \rho_x^{AB}$ be a convex combination of states $\{\rho_x\}$, then
\begin{align}E_{sq}(A,B)_{\rho} \leq \sum_x p_x E_{sq}(A,B)_{\rho^x}\pl.\label{convex}\end{align}
\item[ii)] Monogamy: let $\rho^{BB_1\cdots B_k}$ be a $(k+1)$-partite state, then
\begin{align}\sum_{j=1}^{k} E_{sq}(B, B_j)_\rho \leq S(B)_\rho \pl.\label{eq:sqprops}\end{align}
\item[iii)] $1$-norm faithfulness:
\begin{align}\min \Big \{\| \rho^{AB} - \si^{AB} \|_1 \pl |\pl  \si =\sum_x p_x \si^{A}_x \otimes \si^{B}_x \pl \text{separable}\Big \} \leq 3.1 |A| \sqrt[4]{E_{sq}(A,B)_\rho}\label{faithful}\end{align}
where $\norm{\cdot}{1}$ is the trace class norm.
\end{enumerate}
The convexity and monogamy of squashed entanglement are well-established in the literature \cite{christandl_squashed_2004}. The faithfulness of Schatten norms are proved by Brand\~{a}o et al \cite{brandao_faithful_2011} and Li and Winter \cite{li_squashed_2014}. The readers are also referred to the survey paper \cite{horodecki_quantum_2009} for more information about measures of quantum entanglement.
\subsection{Continuity of entropic expressions}
Given a bipartite state $\rho^{AB}$, its conditional entropy (conditional on $B$) is given by $S(A|B)_\rho=S(AB)_\rho-S(B)_\rho$.
The continuity of conditional expectation is characterized by the Alicki-Fannes inequality as introduced in \cite{alicki_continuity_2004} and later refined by Winter in \cite{winter_tight_2016}. For two bipartite states $\rho$ and $\si$ and $\delta\in [0,1]$,
\begin{equation}
\begin{split}
  \frac{1}{2}\|\rho^{AB} - \sigma^{AB} \|_1 \leq \delta \implies
 |S(A|B)_\rho - S(A|B)_\sigma | \leq  2\delta \log |A| + (1+\delta) h(\frac{\delta}{1+\delta}) \pl,
\end{split}
\label{eq:alickifannes}
\end{equation}
where $h(p) = - p \log p - (1 - p) \log (1 - p)$ is the binary entropy function, $|A|$ is the dimension of $A$. Note that for $0\le \delta\le 1$,
\[(1+\delta) h(\frac{\delta}{1+\delta})\le -2(1+\delta)\frac{\delta}{(1+\delta)}\log \frac{\delta}{(1+\delta)} \le 2\delta\log \frac{1+\delta}{\delta} \pl.\]
We will use the following (weaker but simpler) variant of \eqref{eq:alickifannes},
\begin{align}
 |S(A|B)_\rho - S(A|B)_\sigma | \leq  2\delta \log \frac{2|A|}{\delta} \pl.\label{AF}
\end{align}

\section{Heralded Averaging}
\label{sec:heravg}
For a channel $\Phi:A\to B$ and a bipartite state $\rho \in {B_0\ten A}$, we introduce the following notation of squashed entanglement,
\[E_{sq}(B_0, \Phi)_\rho := E_{sq}(B_0, B)_{id\ten \Phi(\rho)} \pl.\]
and similarly for conditional entropy and mutual information
\[S(B_0|\Phi)_\rho := S(B_0| B)_{id\ten \Phi(\rho)}\pl, \pl I(B_0;\Phi)_\rho := I(B_0; B)_{id\ten \Phi(\rho)} \pl.\]
\begin{lemma}
\label{thm:entdil}Let $Z_{k}(\bigphi)$ be a heralded channel. For any state $\rho \in B_0 \otimes A_{1} \cdots A_{n}$,
\begin{align*}
E_{sq}(B_0, Z_{k}(\bigphi))_{\rho} \leq \frac{1}{L}S(B_0)_\rho \pl,
\end{align*}
where $L = \floor*{n/k}$, the largest integer less or equal than $n/k$.
\end{lemma}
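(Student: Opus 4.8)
The plan is to exploit the block-diagonal structure of $Z_k(\bigphi)$ together with the three listed properties of squashed entanglement, the key move being a symmetrization over the $n$ positions that turns a sum over \emph{all} $k$-subsets into a sum over $L$ \emph{disjoint} ones, where monogamy bites.

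First I would unfold the definition \eqref{A}. Writing $B$ for the full output system of $Z_k(\bigphi)$ (the $n$ channel outputs together with the flag $Y$), the output state $\sigma := (id_{B_0}\otimes Z_k(\bigphi))(\rho)$ is the equal-weight convex combination, over the $\binom nk$ subsets $R\subseteq[n]$ with $|R|=k$, of $\sigma_R := (id_{B_0}\otimes\Phi^R\otimes\Theta^{R^c})(\rho)\otimes\ketbra R^Y$. On $\sigma_R$ the $\Theta$-outputs at positions of $R^c$ and the heralding register $\ketbra R^Y$ are merely tensored on, so — since by the very definition of squashed entanglement adjoining an uncorrelated system to the second leg cannot increase it — one gets $E_{sq}(B_0,B)_{\sigma_R}\le E_{sq}(B_0,\Phi^R)_\rho$. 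Feeding this into convexity \eqref{convex} gives
\[
E_{sq}(B_0, Z_k(\bigphi))_\rho\ \le\ \frac{1}{\binom nk}\sum_{|R|=k} E_{sq}(B_0,\Phi^R)_\rho\ =:\ \bar E .
\]

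Next I would bring in monogamy. For any $L$ pairwise disjoint $k$-subsets $R_1,\dots,R_L$ of $[n]$ — available because $Lk\le n$ — the channels $\Phi^{R_1},\dots,\Phi^{R_L}$ act on the disjoint input factors $A_{R_1},\dots,A_{R_L}$, so $\tau := (id_{B_0}\otimes\Phi^{R_1}\otimes\cdots\otimes\Phi^{R_L})(\rho)$ is a bona fide $(L+1)$-partite state on $B_0\otimes B_{R_1}\otimes\cdots\otimes B_{R_L}$ (leftover input factors traced out), with $\tau^{B_0}=\rho^{B_0}$ and with $\tau^{B_0 B_{R_j}}$ exactly the state computing $E_{sq}(B_0,\Phi^{R_j})_\rho$. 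Monogamy \eqref{eq:sqprops} then gives $\sum_{j=1}^L E_{sq}(B_0,\Phi^{R_j})_\rho \le S(B_0)_\rho$. Finally I would symmetrize: for a uniformly random permutation $\pi$ of $[n]$ put $R_j^\pi := \pi(\{(j-1)k+1,\dots,jk\})$ for $j=1,\dots,L$; the $R_j^\pi$ are disjoint and each is marginally uniform over all $k$-subsets, so averaging the previous inequality over $\pi$ yields $L\,\bar E = \expect_\pi\big[\sum_{j=1}^L E_{sq}(B_0,\Phi^{R_j^\pi})_\rho\big] \le S(B_0)_\rho$, i.e. $\bar E\le \tfrac1L S(B_0)_\rho$. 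Combined with the first display this is the claim.

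I expect the only genuinely delicate point to be the first step — verifying that the $\Theta$-outputs and the classical heralding register can be discarded without raising the squashed entanglement, so that the flagged mixture legitimately falls under the convexity inequality \eqref{convex}. Once that is in place the monogamy step is immediate from \eqref{eq:sqprops}, and the symmetrization is a short invariance/counting argument; the $n-Lk$ leftover positions when $k\nmid n$ are simply never touched by any $R_j^\pi$ and cause no trouble.
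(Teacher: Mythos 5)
Your proposal is correct and follows essentially the same route as the paper: convexity over the flagged mixture, discarding the trivial $\Theta$-outputs and the classical register, and then a symmetrization over permutations of $[n]$ that realizes each $k$-subset as one of $L$ disjoint blocks to which monogamy applies. The only cosmetic differences are that the paper symmetrizes the channel representation before invoking convexity (rather than symmetrizing the resulting scalar average afterwards) and applies monogamy on the input systems $A^{R_l}$ after a data-processing step, whereas you apply it directly to the output of the tensor channel $\Phi^{R_1}\otimes\cdots\otimes\Phi^{R_L}$ — both are valid.
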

\begin{proof} Let $S_n$ be the symmetric group of the integer set $[n]=\{1,\cdots,n\}$. The heralded channel $Z_{k}(\bigphi)$ can be rewritten as
\[Z_{k}(\bigphi) (\rho) =\frac{1}{n!} \sum_{\si\in S_n} \Big (\Phi^{\si(R)} \otimes \canchan^{\si(R)^c}\Big ) (\rho) \otimes \ketbra{\si(R)}^Y\pl,\]
where $R$ can be any $k$-subset of $[n]$ and the summation runs over all permutations $\si$.
Then we find $L$ disjoint subset $R_1,R_2,\cdots, R_L$ with each $|R_l|=k$, and have
\[Z_{k}(\bigphi) (\rho) =\frac{1}{L}\sum_{l=1}^L \frac{1}{n!} \sum_{\si\in S_n} \Big (\Phi^{\si(R_l)} \otimes \canchan^{\si(R_l)^c}\Big ) (\rho) \otimes \ketbra{\si(R_l)}\pl. \]
Exchanging the summation, we have
\begin{align*}
E_{sq}(B_0, Z_{k}(\bigphi))_{\rho} =&
	E_{sq}\Big (B_0,\frac{1}{n!} \sum_{\si\in S_n} \frac{1}{L}\sum_{l=1}^L (\Phi^{\si(R_l)} \otimes \canchan^{\si(R_l)^c} ) \otimes \ketbra{\si(R_l)}^Y\Big )_{\rho}
\\\leq & \frac{1}{n!} \sum_{\si\in S_n} \frac{1}{L}\sum_{l=1}^L E_{sq}\Big(B_0,  (\Phi^{\si(R_l)} \otimes \canchan^{\si(R_l)^c})\Big)_{\rho}
\\= & \frac{1}{n!} \sum_{\si\in S_n} \frac{1}{L}\sum_{l=1}^L E_{sq}\Big(B_0,  \Phi^{\si(R_l)} \Big)_{\rho}
\\\leq & \frac{1}{n!} \sum_{\si\in S_n} \frac{1}{L}\sum_{l=1}^L E_{sq}\Big(B_0,  A^{\si(R_l)} \Big)_{\rho}\pl.
\end{align*}
Here the first inequality follows from convexity and the fact $\ketbra{\si(R_l)}^Y$ is a classical signal. The second equality is because $\Theta$'s are trivial channels. The last inequality is the data processing inequality of squashed entanglement (squashed entanglement is not increasing under local operation). Note that for any permutation $\si$, $\si(R_1), \cdots,\si(R_L)$ are disjoint positions because $R_1,R_2,\cdots, R_L$ are. Thus for any $\si$,
\[\sum_{l=1}^L E_{sq}(B_0,  A^{\si(R_l)} )_{\rho}\le E_{sq}(B_0,  A^{\si(R_1\cdots R_l)} )_{\rho}\le S(B_0)_{\rho} \pl ,\]
which completes the proof.
\end{proof}
The above argument is easily adapted to tensor products of heralded channels. Let $\bigphi_i=\{\Phi_{i,1},\Phi_{i,2},\cdots,\Phi_{i,n_i}\}, 1\leq i\leq m$ be $m$ classes of quantum channels such that each class consists of $n_i$ quantum channels. Let $A=\ten_{i=1}^m (\ten_{j=1}^{n_i}A_{i,j})$ be the total input system and $B=\ten_{i=1}^m(\ten_{j=1}^{n_i} B_{i,j})$ be the quantum part of output system. Then the tensor product of heralded channels
\[\ten_{i=1}^m Z_{k_i}(\bigphi_i) (\rho) =\frac{1}{\prod \binom{n_i}{k_i}}\sum_{R=R_1\cdots R_m , |R_i|=k_i}(\Phi^{R}\ten \Theta^{R^c}) (\rho) \ten \ketbra{R}^Y\pl\]
is from $A$ to $BY$,
where the heralding signal $R$ now is an ensemble of the heralding signals $R_i\subset [n_i]$ for each $Z_{k_i}(\bigphi_i)$. Now consider $\si_i$ to be a permutation from $S_{n_i}$ and denote $\si=(\si_1,\cdots,\si_m)$ as the corresponding element in the product group $S_{n_1}\times \cdots\times S_{n_m}$. We have
\[\ten_{i=1}^m Z_{k_i}(\bigphi_i) (\rho) =\frac{1}{\prod n_i!}\sum_{\si}(\Phi^{\si(R)}\ten \Theta^{\si(R)^c}) (\rho) \ten \ketbra{\si(R)}^Y\pl,\]
where $\si(R)=\si_1(R_1)\cdots\si_m(\R_m)$ is the ensemble of shifted positions and the summation runs over all permutations $\si\in S_{n_1}\times \cdots\times S_{n_m}$.
\begin{cor}\label{cor:entdil}
Let $Z_{k_i}(\bigphi_i), 1\le i\le m$ be a family of heralded channels. For any state $\rho\in B_0\ten A$, where $A= \otimes_{i=1}^m (A_{1,i} \cdots A_{n_i,i})$, we have
\begin{align*}
E_{sq}(B_0, \ten_{i=1}^m Z_{k_i}(\bigphi_i))_{\rho} \leq \frac{1}{L}S(B_0)_\rho \pl,
\end{align*}
where $\displaystyle L = \min_i \floor*{n_i/k_i}$.
\end{cor}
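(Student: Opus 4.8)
The plan is to rerun the proof of Lemma~\ref{thm:entdil} verbatim while carrying the product index structure along; the only new point is the bookkeeping of which input positions get entangled with $B_0$ across the $m$ blocks. Identify the total input index set with the disjoint union $\bigsqcup_{i=1}^m [n_i]$, so that each tensor factor of $A$ carries a label $(i,j)$ with $1\le i\le m$, $1\le j\le n_i$, and write $A^{T}$ for the product of the input systems indexed by a subset $T$.

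\textbf{Step 1 (choice of disjoint labels).} Set $L=\min_i\floor*{n_i/k_i}$. Since $Lk_i\le n_i$ for every $i$, inside each block $[n_i]$ we may pick $L$ pairwise disjoint $k_i$-subsets $R_{i,1},\dots,R_{i,L}$. For $1\le l\le L$ let $R_l=R_{1,l}\cup\cdots\cup R_{m,l}\subseteq\bigsqcup_i[n_i]$; then $|R_l\cap[n_i]|=k_i$, so $R_l$ is a legitimate joint heralding outcome of $\ten_{i=1}^m Z_{k_i}(\bigphi_i)$.

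\textbf{Step 2 (permutation representation).} Using the representation stated just before the corollary, for each \emph{fixed} $l$ the uniform average $\frac{1}{\prod_i n_i!}\sum_{\sigma}$ over $\sigma=(\sigma_1,\dots,\sigma_m)\in S_{n_1}\times\cdots\times S_{n_m}$ of $(\Phi^{\sigma(R_l)}\ten\canchan^{\sigma(R_l)^c})(\rho)\ten\ketbra{\sigma(R_l)}^Y$ is already equal to $\ten_{i=1}^m Z_{k_i}(\bigphi_i)(\rho)$, because $\sigma_i$ ranging over $S_{n_i}$ sends the $k_i$-set $R_{i,l}$ uniformly onto all $k_i$-subsets of $[n_i]$, independently in each block. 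Averaging this identity over $l=1,\dots,L$ gives
\[
\ten_{i=1}^m Z_{k_i}(\bigphi_i)(\rho)=\frac{1}{L}\sum_{l=1}^L\frac{1}{\prod_i n_i!}\sum_{\sigma}\big(\Phi^{\sigma(R_l)}\ten\canchan^{\sigma(R_l)^c}\big)(\rho)\ten\ketbra{\sigma(R_l)}^Y.
\]

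\textbf{Step 3 (convexity, data processing, monogamy).} Apply $E_{sq}(B_0,\cdot)_\rho$ and push it through the two convex combinations via \eqref{convex}, noting that the classical register $\ketbra{\sigma(R_l)}^Y$ does not change $E_{sq}$; then discard the trivial channels $\canchan$ and replace each remaining $\Phi_{i,j}$ by the identity on its input system through the data-processing inequality, exactly as in Lemma~\ref{thm:entdil}. This yields $E_{sq}(B_0,\ten_{i}Z_{k_i}(\bigphi_i))_\rho\le\frac{1}{\prod_i n_i!}\sum_{\sigma}\frac{1}{L}\sum_{l=1}^L E_{sq}(B_0,A^{\sigma(R_l)})_\rho$. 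For each fixed $\sigma$ the sets $\sigma(R_1),\dots,\sigma(R_L)$ are pairwise disjoint in $\bigsqcup_i[n_i]$ (within a block they are images of the disjoint sets $R_{i,l}$ under the bijection $\sigma_i$, and positions in different blocks are automatically distinct), so monogamy \eqref{eq:sqprops} gives $\sum_{l=1}^L E_{sq}(B_0,A^{\sigma(R_l)})_\rho\le S(B_0)_\rho$. Substituting this bound finishes the proof.

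There is no genuine obstacle here — the corollary is a transcription of Lemma~\ref{thm:entdil}. The only place needing care is Steps~1--2: verifying that $R_l$ is a valid joint heralding label and that, for each fixed $l$, the per-block permutation averaging reconstructs the whole product channel. Together these force the binding constraint to be $L=\min_i\floor*{n_i/k_i}$ rather than anything larger, and they are exactly why one cannot do better by "mixing" disjoint subsets across different blocks.
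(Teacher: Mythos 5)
Your proposal is correct and follows exactly the route the paper intends: the paper sketches the product-permutation representation of $\ten_{i=1}^m Z_{k_i}(\bigphi_i)$ immediately before the corollary and leaves the rest as a verbatim adaptation of Lemma \ref{thm:entdil}, which is precisely what you carry out (disjoint $k_i$-subsets per block, averaging over $S_{n_1}\times\cdots\times S_{n_m}$, then convexity, data processing, and monogamy). The bookkeeping in your Steps 1--2, including why the binding constraint is $L=\min_i\floor*{n_i/k_i}$, is accurate.
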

\begin{rem}{\rm
We may replace the squashed entanglement above by any information measure obeying convexity \eqref{convex} and monogamy \eqref{eq:sqprops}. In the future, it is possible that other faithful, monogamous entanglement measures may be discovered, yielding superior bounds via the same techniques as in this paper.}
\end{rem}
\begin{theorem}
\label{thm:entblockperm}
Let $\ten_{i=1}^m Z_{k_i}(\bigphi_i): A \to B$ be the tensor channel of a family of heralded channels. Let $B_0$ be an extra system with dimension $|B_0|$. Suppose $\flambda= 1/\floor*{ \min_i  n_i/k_i} $ is small enough such that $3.1|B_0|\sqrt[4]{\flambda \log|B_0|}\le 2$. Then for any state $\rho\in B_0\ten A$, there exists a state $\eta \in B_0\ten B$ that is separable between $B_0$ and $B$ such that
\begin{align*}
| S(B_0 | \otimes_{i=1}^m Z_{k_i}(\bigphi_i))_{\rho} - S(B_0 | B)_{\eta} |
	 \leq 3.1 |B_0| \sqrt[4]{\flambda S(B_0)_{\rho}}\log \bigg(\frac{1}{3.1 \sqrt[4]{\flambda S(B_0)_{\rho}}}\bigg)\pl.
\end{align*}
\end{theorem}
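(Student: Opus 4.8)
The plan is to combine the entanglement-dilution bound of Corollary~\ref{cor:entdil} with the $1$-norm faithfulness of squashed entanglement \eqref{faithful} and then the Alicki--Fannes continuity estimate \eqref{AF}. First I would apply the channel to the given state $\rho \in B_0 \otimes A$ and write $\tau := (\mathrm{id}_{B_0} \otimes \bigotimes_{i=1}^m Z_{k_i}(\bigphi_i))(\rho)$, a state on $B_0 \otimes B$. By Corollary~\ref{cor:entdil} we have $E_{sq}(B_0, B)_\tau \le \flambda\, S(B_0)_\rho$ with $\flambda = 1/\lfloor \min_i n_i/k_i \rfloor$. Feeding this into the faithfulness inequality \eqref{faithful} with $A \rightsquigarrow B_0$, there is a separable state $\eta$ (separable across $B_0 : B$, of the form $\sum_x p_x \eta^{B_0}_x \otimes \eta^B_x$) with
\[
\|\tau^{B_0 B} - \eta^{B_0 B}\|_1 \;\le\; 3.1\,|B_0|\,\sqrt[4]{E_{sq}(B_0,B)_\tau} \;\le\; 3.1\,|B_0|\,\sqrt[4]{\flambda\, S(B_0)_\rho}\,.
\]
Call this quantity $2\delta$, i.e. set $\delta = \tfrac12 \cdot 3.1\,|B_0|\sqrt[4]{\flambda\, S(B_0)_\rho}$; the smallness hypothesis $3.1|B_0|\sqrt[4]{\flambda \log|B_0|} \le 2$ (together with $S(B_0)_\rho \le \log|B_0|$) guarantees $\delta \le 1$, so that \eqref{AF} is applicable.

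Next I would invoke the simplified Alicki--Fannes bound \eqref{AF} with the roles $A \rightsquigarrow B_0$, $B \rightsquigarrow B$: since $\tfrac12\|\tau - \eta\|_1 \le \delta$,
\[
\big| S(B_0 | B)_\tau - S(B_0|B)_\eta \big| \;\le\; 2\delta \log\frac{2|B_0|}{\delta}\,.
\]
Now $2\delta = 3.1\,|B_0|\sqrt[4]{\flambda S(B_0)_\rho}$, and the target right-hand side is exactly $2\delta \cdot \log\!\big(1/(3.1\sqrt[4]{\flambda S(B_0)_\rho})\big)$. So the remaining task is purely arithmetic: to verify that $\log\frac{2|B_0|}{\delta} \le \log\!\big(1/(3.1\sqrt[4]{\flambda S(B_0)_\rho})\big)$, i.e. that $\frac{2|B_0|}{\delta} \le \frac{1}{3.1\sqrt[4]{\flambda S(B_0)_\rho}}$. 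Substituting $\delta = \tfrac12 \cdot 3.1\,|B_0|\,\sqrt[4]{\flambda S(B_0)_\rho}$ turns the left side into $\frac{4}{3.1\sqrt[4]{\flambda S(B_0)_\rho}}$, which is $4$ times the right side — so the inequality as literally written needs a harmless adjustment of the constant inside the logarithm (or one absorbs the $\log 4$ into the stated constant $3.1$, or uses the slightly sharper form of Alicki--Fannes \eqref{eq:alickifannes} rather than the lossy \eqref{AF}). I would present it using \eqref{eq:alickifannes} directly: there $(1+\delta)h(\delta/(1+\delta)) \le 2\delta \log\frac{1+\delta}{\delta}$, giving $|S(B_0|B)_\tau - S(B_0|B)_\eta| \le 2\delta\log|B_0| + 2\delta\log\frac{1+\delta}{\delta} = 2\delta\log\frac{(1+\delta)|B_0|}{\delta}$, and then bound $(1+\delta) \le 2$ and check the resulting constant.

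The main obstacle is not conceptual — the three ingredients chain together mechanically — but rather the bookkeeping of constants and the monotonicity step: one must confirm that $\log(2|B_0|/\delta)$ is genuinely bounded by $\log(1/(3.1\sqrt[4]{\flambda S(B_0)_\rho}))$ up to the advertised constant, which is where the hypothesis $3.1|B_0|\sqrt[4]{\flambda\log|B_0|}\le 2$ does its work (it forces $\flambda S(B_0)_\rho$ small enough that the logarithm is positive and the $|B_0|$ factor is dominated). One should also note the degenerate case $S(B_0)_\rho = 0$, where $\tau$ is already a product state, $\delta = 0$, and the inequality holds trivially with $\eta = \tau$; and the case $E_{sq} = 0$ exactly, handled the same way. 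Finally, $\eta$ being separable across $B_0:B$ is exactly what \eqref{faithful} delivers, so no extra argument is needed for that clause of the statement. I would close by remarking that identical reasoning applies verbatim with $\bigotimes_i Z_{k_i}(\bigphi_i)$ replaced by a single heralded channel $Z_k(\bigphi)$ via Lemma~\ref{thm:entdil}, which is the form used in the proof of Theorem~\ref{thm:introher}.
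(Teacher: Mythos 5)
Your proposal is correct and follows essentially the same route as the paper's own proof: Corollary~\ref{cor:entdil} to bound $E_{sq}(B_0,B)$ by $\flambda S(B_0)_\rho$, then faithfulness \eqref{faithful} to produce the separable $\eta$, then the Alicki--Fannes bound \eqref{AF}. The constant discrepancy you flag is real --- the paper's own proof in fact concludes with $\log\bigl(4/(3.1\sqrt[4]{\flambda S(B_0)_\rho})\bigr)$ rather than the $\log\bigl(1/(3.1\sqrt[4]{\flambda S(B_0)_\rho})\bigr)$ written in the theorem statement, so your bookkeeping is the more careful of the two.
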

\begin{proof}
By Corollary \ref{cor:entdil},
\[E_{sq}(B_0, \otimes_{i=1}^m Z_{k_i}(\bigphi_i))_\rho \leq \flambda S(B_0)_\rho\pl.\]
Then we apply the faithfulness of squashed entanglement \eqref{faithful} to find a separable $\eta$ such that
\[\norm{id_{B_0}\otimes_{i=1}^m Z_{k_i}(\bigphi_i)(\rho)-\eta}{1}\le 3.1 |B_0| \sqrt[4]{\flambda S(B_0)_{\rho}}\le 2\pl.\]
Therefore, by Alicki-Fannes inequality \eqref{AF},
\begin{align*}
| S(B_0 | \otimes_{i=1}^m Z_{k_i} (\bigphi_i))_{\rho} - S(B_0|B)_{\eta} |
	 \leq 3.1 |B_0| \sqrt[4]{\flambda S(B_0)_{\rho}} \log \frac{4}{3.1 \sqrt[4]{\flambda S(B_0)_{\rho}}} &\pl. \qedhere
\end{align*}
\end{proof}

\begin{section}{Iterative Bounds on Holevo Information}
\label{sec:it}
 We first state the technical version of our theorems regarding Holevo information in Section \ref{sec:holevothms} and leave the proofs of the main Theorem \ref{thm:holevostrongadd} and its corollaries to Section \ref{sec:holevoproofs}.
\subsection{Holevo information estimates for heralded channels}
\label{sec:holevothms}
Our main technical theorem considers tensor products of heralded channels or flagged switch channels, showing that heralding removes the superadditivity from a channel by diluting the available entanglement. It is the building block we use to prove the simpler corollaries, and ultimately Theorem \ref{thm:introer} and \ref{thm:introher}. In the following, we always assume that $\bigphi_i=\{\Phi_{i,1},\cdots,\Phi_{i,n_i}\},  \bigpsi_i=\{\Psi_{i,1},\cdots,\Psi_{i,n_i}\}$ are pairs of families of channel such that for each $(i,j)$, $\Phi_{i,j},\Psi_{i,j}:A_{i,j} \to B_{i,j}$ share the same input and output system.
\begin{theorem}
\label{thm:holevostrongadd}Let $Z_{k_i}(\bigphi_i; \bigpsi_i), 1\le i\le m$ be a family of flagged switch channels and $\Phi_0: A_0 \to B_0$ be an arbitrary channel. Suppose that
${\flambda = 1/\lfloor \min_i  n_i/k_i  \rfloor}$ is small enough such that $3.1|B_0|\sqrt[4]{\flambda \log|B_0|}\le 2$. Then
\begin{align*}
\chi(\Phi_0 \otimes_{i=1}^m Z_{k_i} (\bigphi_i; \bigpsi_i)) \leq \chi(\Phi_0) + &\chi(\otimes_{i=1}^m Z_{k_i} (\bigphi_i))
    +\sum_{i=1}^m (1-\frac{ k_i}{n_i}) \sum_{j=1}^{n_i} \chi^{(pot)}(\Psi_{i,j})
	\\ &+ 6.2 |B_0| \sqrt[4]{\flambda \log |B_0|}\log \bigg (\frac{4}{3.1 \sqrt[4]{\flambda \log |B_0|}} \bigg )\pl.
\end{align*}
\end{theorem}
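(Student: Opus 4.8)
The plan is to peel the $\Psi$-channels off by a chain-rule argument, turning each flagged switch channel into its heralded version at the price of the $\chi^{(pot)}(\Psi_{i,j})$ terms, and then to invoke the heralded-averaging estimate (Theorem~\ref{thm:entblockperm}) to strip the superadditivity from the remaining heralded piece. Throughout, one fixes a near-optimal classical--quantum input for each channel in question and works with the resulting output mutual information.

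\emph{Step 1 (reduction to heralded channels).} Fix a near-optimal cq-input $\rho^{XA_0A}$ for $\Phi_0\otimes_i Z_{k_i}(\bigphi_i;\bigpsi_i)$ with output $\sigma$ on $XB_0BY$. The flag $Y$ is classical and independent of the input, so $I(X;B_0BY)_\sigma=\sum_R p(R)\,I(X;B_0B)_{\sigma_R}$ with $\sigma_R=(\Phi_0\otimes\Phi^R\otimes\Psi^{R^c})(\rho)$. For fixed $R$ split $B=B_R\otimes B_{R^c}$ according to the positions carrying $\Phi$ resp.\ $\Psi$ channels and expand $I(X;B_{R^c}\mid B_0B_R)$ by the chain rule over the subsystems of $B_{R^c}$: each summand has the form $I(X;B_{i,j}\mid C)$ with $B_{i,j}=\Psi_{i,j}(A_{i,j})$ and $C$ the output of a channel applied to the remaining input registers, hence is $\le\chi^{(pot)}(\Psi_{i,j})$ by the definition of potential Holevo capacity. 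Thus $I(X;B_0B)_{\sigma_R}\le I(X;B_0B_R)_{(\Phi_0\otimes\Phi^R)(\rho)}+\sum_i\sum_{j\in R_i^c}\chi^{(pot)}(\Psi_{i,j})$. Averaging over $R$, the event $\{j\in R_i^c\}$ has probability $1-k_i/n_i$, and (the $\Theta$-outputs being product and uncorrelated) the leading term averages to $I(X;B_0BY)$ for the input $\rho$ to $\Phi_0\otimes_i Z_{k_i}(\bigphi_i)$; taking the supremum over $\rho$ yields
\[
\chi(\Phi_0\otimes_i Z_{k_i}(\bigphi_i;\bigpsi_i))\ \le\ \chi\big(\Phi_0\otimes_i Z_{k_i}(\bigphi_i)\big)+\sum_i\Big(1-\tfrac{k_i}{n_i}\Big)\sum_j\chi^{(pot)}(\Psi_{i,j}).
\]

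\emph{Step 2 (removing superadditivity from the heralded part).} Put $\Upsilon_0=\otimes_i Z_{k_i}(\bigphi_i)$, take a near-optimal input $\rho^{XA_0A}$ with output $\sigma^{XB_0BY}$, and split $I(X;B_0BY)=I(X;BY)+I(X;B_0\mid BY)\le\chi(\Upsilon_0)+I(X;B_0\mid BY)$. It suffices to bound $I(X;B_0\mid BY)=S(B_0\mid BY)_\sigma-S(B_0\mid XBY)_\sigma$ by $\chi(\Phi_0)+6.2|B_0|\sqrt[4]{\flambda\log|B_0|}\,\log\!\big(\tfrac{4}{3.1\sqrt[4]{\flambda\log|B_0|}}\big)$. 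Apply Theorem~\ref{thm:entblockperm} to the state $(\Phi_0\otimes\mathrm{id}_A)(\tr_X\rho)\in B_0\otimes A$ to get a $B_0$:$BY$-separable $\eta$ with $S(B_0\mid BY)_\sigma\le S(B_0\mid B)_\eta+\varepsilon$, and apply it again to each $(\Phi_0\otimes\mathrm{id}_A)(\rho_x)$ to get $B_0$:$BY$-separable $\eta_x$ with $S(B_0\mid BY)_{\sigma_x}\ge S(B_0\mid B)_{\eta_x}-\varepsilon_x$; bounding $S(B_0)\le\log|B_0|$ in the error, $\varepsilon$ and all $\varepsilon_x$ are at most $3.1|B_0|\sqrt[4]{\flambda\log|B_0|}\,\log(\tfrac{4}{3.1\sqrt[4]{\flambda\log|B_0|}})$. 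Choosing $\eta=\sum_x p(x)\eta_x$ (still separable, still close to $\sigma^{B_0BY}$), one is reduced to showing $S(B_0\mid B)_\eta-\sum_x p(x)\,S(B_0\mid B)_{\eta_x}\le\chi(\Phi_0)$.

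\emph{Step 3 (the separable estimate, and the main obstacle).} The left-hand side above equals $I(X;B_0\mid B)_\Xi$ for the cq-state $\Xi=\sum_x p(x)\ketbra{x}^X\otimes\eta_x^{B_0B}$; since every $\eta_x$ is separable across $B_0$:$B$, the state $\Xi$ is in fact separable across $B_0$:$(XB)$, and each $\eta_x^{B_0}$ is essentially a $\Phi_0$-output $\Phi_0(\rho_x^{A_0})$. The claim is that with no entanglement between $B_0$ and the rest, the information $B_0$ carries about $X$ (even given $B$) cannot exceed the unassisted single-letter Holevo quantity of $\Phi_0$ on the ensemble $\{p(x),\rho_x^{A_0}\}$, i.e.\ $\le\chi(\Phi_0)$. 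Granting this, combining Steps~1--3 and taking suprema over inputs proves the theorem. The main obstacle is precisely this last inequality: controlling the residual, possibly ``locking''-type superadditivity of the low-dimensional system $B_0$ once the monogamy bound has forced $B_0$ to be (approximately) unentangled from the heralded output $BY$ --- this is where faithfulness and monogamy of squashed entanglement genuinely do their work, and where the fourth-root, dimension-dependent error is intrinsic. The remaining bookkeeping --- assembling the two applications of Theorem~\ref{thm:entblockperm} with the continuity estimate \eqref{AF} into the stated constant $6.2$ --- is then routine.
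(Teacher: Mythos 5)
Your outline follows the paper's own route: Step 1 is the paper's Lemma~\ref{lem:potrep}, and Step 2 is the paper's splitting $I(X;B_0B)=I(X;B)+I(X;B_0|B)$ followed by Theorem~\ref{thm:entblockperm} and the Alicki--Fannes estimate. But the inequality you defer to Step 3 and declare to be ``the main obstacle'' --- that $S(B_0|B)_\eta-\sum_x p_x S(B_0|B)_{\eta_x}\le\chi(\Phi_0)$ for separable $\eta_x$ --- is a genuine missing step, and it is not where squashed entanglement does its work (that already happened in Step 2). The paper proves it as a self-contained entropic fact (Lemma~\ref{lem:holevopart}): write each separable $\eta_x=\sum_j p_{x,j}\,\eta_{x,j}^{B_0}\otimes\eta_{x,j}^{B}$, replace the $B$-components by a classical label via a cq channel $\Phi^{cq}(\ketbra{x,j})=\eta_{x,j}^B$, and use $S(B_0|B)_\eta\le S(B_0)_\eta$ together with data processing $S(B_0|B)_{\eta_x}\ge S(B_0|X)_{\eta'_x}=\sum_j p_{x,j}S(B_0)_{\eta_{x,j}}$; the difference is then the Holevo quantity of the ensemble $\{p_xp_{x,j},\eta_{x,j}^{B_0}\}$, hence at most $\chi(\Phi_0)$. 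Without this argument your proof does not close.

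There is a second, related flaw in how you set up Step 2: you apply Theorem~\ref{thm:entblockperm} to $(\Phi_0\otimes\mathrm{id})(\rho_x)$, so the separable approximants $\eta_x$ live on $B_0\otimes BY$ with \emph{arbitrary} $B_0$-components, and your parenthetical that ``each $\eta_x^{B_0}$ is essentially a $\Phi_0$-output'' is unjustified --- faithfulness only controls the trace distance of the joint state, not the form of the components in a separable decomposition. If the $B_0$-components are arbitrary, the argument above only yields the trivial bound $\log|B_0|$ in place of $\chi(\Phi_0)$. The paper avoids this by factoring $\Phi_0\otimes_i Z_{k_i}(\bigphi_i)$ through the erasure channels first, applying faithfulness to the pre-image state so that the separable decomposition lives on the \emph{input} side of $\Phi_0$ and the local channels (which preserve separability) then force every $B_0$-component to be a genuine $\Phi_0$-output. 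So the order in which you invoke faithfulness relative to $\Phi_0$ is not mere bookkeeping; it is what makes the $\chi(\Phi_0)$ term attainable.
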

The above theorem bounds the additivity violation between heralded channels, and an arbitrary extra channel $\Phi_0$. The form is analogous to an approximate strong additivity for heralded channels with small $\flambda$, but it depends on the dimension of output system $B_0$. So it is not truly a strong additivity, as the bound becomes trivial when $\Phi_0$ is a channel with a sufficiently large output space. Based off of Theorem \ref{thm:holevostrongadd}, we prove that heralded channels are approximately additive.
\begin{cor}
\label{cor:holevoadd}
Let the output system $B_{i,j}$ to each $\Phi_{i,j}$ be of dimension at most $d$ and let
${\flambda = 1/\lfloor \min_i  n_i/k_i  \rfloor}$ be small enough such that $3.1d\sqrt[4]{\flambda \log d}\le 2$. Then,
\begin{equation}
\begin{split}
 \chi(\otimes_{i=1}^m Z_{k_i} (\bigphi_i; \bigpsi_i))
  \leq
	\sum_{i=1}^m \frac{k_i}{n_i} \sum_{j=1}^{n_i} \chi(\Phi_{i,j}) + \sum_{i=1}^m (1-\frac{ k_i}{n_i}) \sum_{j=1}^{k_i} \chi^{(pot)}(\Psi_{i,j}) \\
	+ 6.2 d(\sum_{i}k_i) \sqrt[4]{\flambda \log d}\log \bigg (\frac{4}{3.1 \sqrt[4]{\flambda \log d}} \bigg )
\nonumber
\end{split}
\end{equation}
As a consequence, for a single flagged switch channel $Z_k(\bigphi ; \bigpsi)$
\begin{align*}
C(Z_k(\bigphi ; \bigpsi)) \leq \frac{k}{n} \sum_{j=1}^{n} \chi(\Phi_j) + (1-\frac{  k}{n}) \sum_{j=1}^{n} \chi^{(pot)}(\Psi_j)
	+ 6.2 dk \sqrt[4]{\flambda \log d}\log \bigg (\frac{4}{3.1 \sqrt[4]{\flambda \log d}} \bigg )\pl.
\end{align*}
\end{cor}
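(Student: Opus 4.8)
The plan is to deduce both displayed inequalities from Theorem~\ref{thm:holevostrongadd}: first strip off the ``$\Psi$'' channels, then peel off the active ``$\Phi$'' channels one at a time, and finally pass to the single-channel statement by regularization. For the first step, apply Theorem~\ref{thm:holevostrongadd} with $\Phi_0$ the trivial channel onto a one-dimensional output space, so that $|B_0|=1$; then $\chi(\Phi_0)=0$ and the $|B_0|$-dependent correction term vanishes, leaving
\[
\chi\big(\textstyle\bigotimes_{i=1}^m Z_{k_i}(\bigphi_i;\bigpsi_i)\big)\ \le\ \chi\big(\textstyle\bigotimes_{i=1}^m Z_{k_i}(\bigphi_i)\big)\ +\ \sum_{i=1}^m\Big(1-\tfrac{k_i}{n_i}\Big)\sum_{j}\chi^{(pot)}(\Psi_{i,j}).
\]
It therefore suffices to bound $\chi\big(\bigotimes_i Z_{k_i}(\bigphi_i)\big)$ by $\sum_i\frac{k_i}{n_i}\sum_j\chi(\Phi_{i,j})$ plus the claimed error.

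For that bound I would argue by induction on the total number of channels $\sum_i n_i$, peeling one active channel at a time. Fix a family $i$ with $k_i\ge1$ and one of its positions, say the first; since the classical flag $Y$ records the randomly chosen active set $R_i$ and is independent of the input, the output is block diagonal in the event ``the first position of family $i$ lies in $R_i$'' versus its complement, so $\chi$ is dominated by the corresponding convex combination:
\begin{align*}
\chi\big(\textstyle\bigotimes_i Z_{k_i}(\bigphi_i)\big)\ &\le\ \tfrac{k_i}{n_i}\,\chi\big(\Phi_{i,1}\otimes Z_{k_i-1}(\bigphi_i^{(1)})\otimes\textstyle\bigotimes_{i'\neq i}Z_{k_{i'}}(\bigphi_{i'})\big)\\
&\quad+\ \Big(1-\tfrac{k_i}{n_i}\Big)\,\chi\big(Z_{k_i}(\bigphi_i^{(1)})\otimes\textstyle\bigotimes_{i'\neq i}Z_{k_{i'}}(\bigphi_{i'})\big),
\end{align*}
where $\bigphi_i^{(1)}=\bigphi_i\setminus\{\Phi_{i,1}\}$ and the trivial channel $\canchan$ sitting on that position in the second term has been dropped. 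To the first summand apply Theorem~\ref{thm:holevostrongadd} with $\Phi_0=\Phi_{i,1}$, whose output dimension is at most $d$: this peels off $\chi(\Phi_{i,1})$, reduces to a tensor of heralded channels with one fewer channel, and costs at most $6.2\,d\sqrt[4]{\flambda\log d}\,\log\!\big(\tfrac{4}{3.1\sqrt[4]{\flambda\log d}}\big)$, since deleting an active position replaces $n_i/k_i$ by $(n_i-1)/(k_i-1)\ge n_i/k_i$ and hence does not worsen the dilution parameter. Now apply the induction hypothesis to both reduced tensors. The coefficients telescope exactly, $\frac{k_i}{n_i}\cdot\frac{k_i-1}{n_i-1}+\big(1-\frac{k_i}{n_i}\big)\cdot\frac{k_i}{n_i-1}=\frac{k_i}{n_i}$, so every $\chi(\Phi_{i,j})$ appears with weight at most $k_i/n_i$; and since exactly $\sum_i k_i$ applications of Theorem~\ref{thm:holevostrongadd} take place (one per active channel peeled), the errors sum to at most $6.2\,d\big(\sum_i k_i\big)\sqrt[4]{\flambda\log d}\,\log\!\big(\tfrac{4}{3.1\sqrt[4]{\flambda\log d}}\big)$, using that $y\mapsto y\log(4/(3.1y))$ is increasing throughout the range allowed by the hypothesis.

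The step I expect to cost the most work is controlling the ``inactive'' branches. Each such conditioning shrinks a family without lowering its success number, so the dilution parameter $n_i/k_i$ governing later applications of Theorem~\ref{thm:holevostrongadd} can drift below $\flambda^{-1}$, and once all $n_i-k_i$ inactive positions of family $i$ have been consumed one is left with a bare tensor $\bigotimes_{j\in T}\Phi_{i,j}$ carrying no heralding, tensored with the still-heralded later families, to which the theorem no longer applies. I would dispose of these ``boundary'' branches with the elementary bounds $\chi(\M_1\otimes\M_2)\le\chi^{(pot)}(\M_1)+\chi(\M_2)$ and $\chi^{(pot)}(\M)\le\log(\dim\mathrm{out}\,\M)$, so that such a branch costs at most $|T|\log d$ beyond the recursion on the later families, together with the observations that the all-active ``suffix'' $T$ of family $i$ has $\mathbb{E}|T|\le k_i/(n_i-k_i)$ and that these branches are geometrically suppressed as $n_i/k_i$ grows; the hypothesis $3.1\,d\sqrt[4]{\flambda\log d}\le2$ then forces their total contribution to be dominated by the stated error term, which yields the first displayed inequality.

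Finally, the statement for a single flagged switch channel follows by regularization: $Z_k(\bigphi;\bigpsi)^{\otimes m}$ is exactly the special case $n_i\equiv n$, $k_i\equiv k$, $\bigphi_i\equiv\bigphi$, $\bigpsi_i\equiv\bigpsi$ of the first inequality, and there $\flambda=1/\lfloor n/k\rfloor$ does not depend on $m$; dividing the resulting bound on $\chi\big(Z_k(\bigphi;\bigpsi)^{\otimes m}\big)$ by $m$ and letting $m\to\infty$ sends the left-hand side to $C(Z_k(\bigphi;\bigpsi))$ by the HSW regularization formula, while leaving the right-hand side unchanged.
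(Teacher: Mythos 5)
Your overall architecture (strip the $\Psi$'s via the potential capacity, peel the active channels one at a time via Theorem \ref{thm:holevostrongadd} with $\Phi_0$ the peeled channel, then regularize) matches the paper's, and your first and last steps are sound. The gap is in the peeling decomposition. You condition on whether a \emph{fixed} position lies in $R_i$, which produces an inactive branch $Z_{k_i}(\bigphi_i^{(1)})$ in which $n_i$ drops but $k_i$ does not, so the dilution ratio drifts downward -- a problem you correctly flag. But your proposed repair addresses only the terminal bare-tensor branches, while the damage is already done in the intermediate ones: every later application of Theorem \ref{thm:holevostrongadd} along a drifted branch carries a correction computed from the \emph{current} ratio, which exceeds the uniform per-peel cost $6.2d\sqrt[4]{\flambda\log d}\log\big(\tfrac{4}{3.1\sqrt[4]{\flambda\log d}}\big)$ you charge, and the hypothesis $3.1d\sqrt[4]{\flambda'\log d}\le 2$ can fail outright long before a branch degenerates to a bare tensor (if $n_i/k_i$ sits just above threshold, a short initial run of inactive positions -- an event of probability bounded away from zero -- already kills it for the entire remainder of that branch, which your fallback then prices at roughly $k_i\log d$). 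So the accounting ``exactly $\sum_i k_i$ applications, each costing the stated correction'' does not hold, and the assertions $\mathbb{E}|T|\le k_i/(n_i-k_i)$ and ``geometrically suppressed'' would require a genuine large-deviation analysis of the hypergeometric process to be turned into the stated error term; as written they are unverified and I do not believe they recover the stated constants or $\flambda$-dependence.

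The paper sidesteps all of this with a different decomposition: instead of conditioning on a fixed position, it writes $Z_k(\Phi_1,\dots,\Phi_n)=\frac{1}{n}\sum_{j=1}^n\Phi_j\otimes Z_{k-1}(\Phi_1,\dots,\Phi_{j-1},\Phi_{j+1},\dots,\Phi_n)$, averaging over which active position is designated for peeling (each flag value $R$ is counted once for each of the $k$ choices of $j\in R$). In \emph{every} term of this convex combination both $n$ and $k$ decrease by one; since $(n-1)/(k-1)\ge n/k$, the parameter $\flambda$ never increases, there is no inactive branch, the correction $2\delta\log\tfrac{4d}{\delta}$ is a uniform bound at each of the $k$ recursion depths, and the weights telescope to $k/n$ per channel exactly as in your computation. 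Replacing your conditioning step with this averaging identity closes the gap, and the rest of your argument (including the regularization to get the capacity statement) goes through.
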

Unlike Theorem \ref{thm:holevostrongadd}, this corollary depends only on the dimension of the constituent channels $\Phi_{i,j}$. Therefore, it is an approximate additivity when $\flambda$ is small enough. It implies Theorem \ref{thm:introher} by applying to heralded channels (for which $\Psi$ are trivial channels).

Finally, we derive a bound analogous to strong additivity entirely in terms of single-letter expressions.
\begin{cor}
\label{cor:holevofinal}
Let the output system to each $\Phi_{i,j}$ and $\Phi_0$ be of dimension at most $d$ and let
${\flambda = 1/\lfloor \min_i  n_i/k_i  \rfloor}$ be small enough such that $3.1d\sqrt[4]{\flambda \log d}\le 2$.
\begin{align*}
\chi(\Phi_0 \ten& \otimes_{i=1}^m Z_{k_i} (\bigphi_i; \bigpsi_i))  \leq \chi(\Phi_0) + \sum_{i=1}^m \frac{k_i}{n_i} \sum_{j=1}^{n_i} \chi(\Phi_{i,j}) +
	\sum_{i=1}^m (1-\frac{k_i}{n_i}) \sum_{j=1}^{n_i} \chi^{(pot)}(\Psi_{i,j}) \\
	&+ 6.2 d (1+\sum_{i}k_i)(\flambda \log d)^{\frac{1}{4}}\log \bigg (\frac{d}{3.1 (\flambda \log d)^{\frac{1}{4}}} \bigg )\pl.
\nonumber	
\end{align*}
\end{cor}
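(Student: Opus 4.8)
The proof is a short chaining of the two results just established: Theorem~\ref{thm:holevostrongadd} peels off the extra channel $\Phi_0$ but leaves behind the term $\chi(\otimes_{i=1}^m Z_{k_i}(\bigphi_i))$, and Corollary~\ref{cor:holevoadd} estimates precisely that term. There is no genuinely new idea; the work lies entirely in verifying hypotheses and in consolidating the two resulting error expressions.

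First I would apply Theorem~\ref{thm:holevostrongadd} to $\Phi_0 \otimes \otimes_{i=1}^m Z_{k_i}(\bigphi_i;\bigpsi_i)$. Its hypothesis $3.1|B_0|\sqrt[4]{\flambda\log|B_0|}\le 2$ is implied by the present one, since $|B_0|\le d$ and $x\mapsto x\sqrt[4]{\log x}$ is nondecreasing for $x\ge 1$; the theorem then yields
\[
\chi\big(\Phi_0 \otimes \otimes_{i=1}^m Z_{k_i}(\bigphi_i;\bigpsi_i)\big) \le \chi(\Phi_0) + \chi\big(\otimes_{i=1}^m Z_{k_i}(\bigphi_i)\big) + \sum_{i=1}^m\Big(1-\frac{k_i}{n_i}\Big)\sum_{j=1}^{n_i}\chi^{(pot)}(\Psi_{i,j}) + E_1,
\]
where $E_1 = 6.2|B_0|\sqrt[4]{\flambda\log|B_0|}\log\big(4/(3.1\sqrt[4]{\flambda\log|B_0|})\big)$ is the error term of Theorem~\ref{thm:holevostrongadd}.

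Next I would bound $\chi(\otimes_{i=1}^m Z_{k_i}(\bigphi_i))$ using Corollary~\ref{cor:holevoadd}. The heralded channel is by definition the flagged switch channel $\otimes_{i=1}^m Z_{k_i}(\bigphi_i;\canchan,\dots,\canchan)$ with all auxiliary channels trivial, and a fixed-output channel $\canchan$ satisfies $\chi^{(pot)}(\canchan)=0$: appending a channel whose output is a constant state adds nothing correlated with the classical index, so $\chi(\canchan\otimes\Psi)=\chi(\Psi)$ for every $\Psi$. Since the hypothesis of Corollary~\ref{cor:holevoadd} coincides with the present one, this gives
\[
\chi\big(\otimes_{i=1}^m Z_{k_i}(\bigphi_i)\big) \le \sum_{i=1}^m \frac{k_i}{n_i}\sum_{j=1}^{n_i}\chi(\Phi_{i,j}) + E_2,\qquad E_2 = 6.2\, d\big(\sum_i k_i\big)\sqrt[4]{\flambda\log d}\,\log\big(4/(3.1\sqrt[4]{\flambda\log d})\big).
\]
Substituting this into the previous display reproduces exactly the right-hand side claimed in the corollary, up to the total error $E_1+E_2$.

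It then remains to show $E_1+E_2$ is dominated by the displayed error term. Here I would use $|B_0|\le d$ and $\log|B_0|\le\log d$ together with the fact that $t\mapsto t\log(4/(3.1t))$ is nondecreasing for $0\le t\le 4/(3.1e)$ — and the hypothesis forces $\sqrt[4]{\flambda\log|B_0|}\le 2/(3.1|B_0|)<4/(3.1e)$ — to conclude $E_1\le 6.2\,d\sqrt[4]{\flambda\log d}\,\log(4/(3.1\sqrt[4]{\flambda\log d}))$. Adding $E_2$ produces the prefactor $1+\sum_i k_i$, and finally $\log(4/x)\le\log(d/x)$ (immediate when $d\ge 4$; for the remaining small values of $d$ one absorbs a constant into $6.2$) brings the bound to the stated form. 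The only points requiring any attention are this consolidation of the two error terms and the elementary identity $\chi^{(pot)}(\canchan)=0$; the rest is a direct appeal to Theorem~\ref{thm:holevostrongadd} and Corollary~\ref{cor:holevoadd}, so I do not anticipate a real obstacle.
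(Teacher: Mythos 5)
Your proposal matches the paper's intended argument exactly: the paper gives no separate proof of Corollary~\ref{cor:holevofinal}, stating only that it ``combines Theorem~\ref{thm:holevostrongadd} and Corollary~\ref{cor:holevoadd},'' which is precisely the chaining you carry out. Your consolidation of the two error terms (monotonicity of $t\mapsto t\log(4/(3.1t))$ on the relevant range, $|B_0|\le d$, and the replacement of $\log(4/\cdot)$ by $\log(d/\cdot)$, which as you note is only literally valid for $d\ge 4$) is correct and supplies the details the paper leaves implicit.
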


This corollary combines the Theorem \ref{thm:holevostrongadd} and Corollary \ref{cor:holevoadd} to bound all superadditivity present, allowing us to write the Holevo information of a product of heralded channels enhanced by an arbitrary $\Phi_0$ as a sum of single-letter expressions and correction terms. We say that this bound is merely analogous to strong additivity due to the dimension-dependence on the channel $\Phi_0$, which prevents us from bounding the potential capacity of a heralded channel.

\begin{subsection}{Now the proofs}
\label{sec:holevoproofs}
We start with a lemma showing that Holevo information is not increasing when conditioning with a separable correlation.
\begin{lemma}
\label{lem:holevopart}
Let $\Phi_0:A_0 \to B_0$ be a quantum channel. Let $B$ be any quantum system, and $\rho_x^{A_0B}$ be a family of separable bipartite state. Then for any $\eta = \sum_x p_x\eta_x$, where $\eta_x=id\ten \Phi_0(\rho_x)$ and $\{p_x\}$ probability distribution, we have
\begin{align*}
S(B_0 | B)_{\eta} - \sum_x p_x S(B_0 | B)_{\eta_x} \leq \chi(\Phi_0)\pl.
\end{align*}
\end{lemma}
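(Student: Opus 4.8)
The plan is to rewrite the left‑hand side as a conditional mutual information and then bound it using the separability of the $\rho_x^{A_0B}$. First I would attach a classical register $X$ recording the index $x$, forming the state $\eta^{XB_0B}=\sum_x p_x\ketbra{x}^X\otimes\eta_x^{B_0B}$ with $\eta_x^{B_0B}=(\Phi_0\otimes id)(\rho_x^{A_0B})$. Since $X$ is classical, expanding $S(B_0|XB)_\eta=S(XB_0B)_\eta-S(XB)_\eta$ gives $S(B_0|XB)_\eta=\sum_x p_x S(B_0|B)_{\eta_x}$, so the quantity to be bounded is exactly $I(X;B_0|B)_\eta=S(B_0|B)_\eta-S(B_0|XB)_\eta$.

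Next I would unfold the separability. Writing $\rho_x^{A_0B}=\sum_y q_{x,y}\,\alpha_{x,y}^{A_0}\otimes\beta_{x,y}^{B}$ and introducing a second classical register $Y$, consider $\eta^{XYB_0B}=\sum_{x,y}p_xq_{x,y}\,\ketbra{x}^X\otimes\ketbra{y}^Y\otimes\Phi_0(\alpha_{x,y})^{B_0}\otimes\beta_{x,y}^{B}$, whose marginal on $XB_0B$ is the state above. The key observation is that conditioned on $(X,Y)=(x,y)$ the state on $B_0B$ is the product $\Phi_0(\alpha_{x,y})\otimes\beta_{x,y}$, hence $I(B;B_0|XY)_\eta=0$. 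Combining this with the chain rule and non‑negativity of mutual information and of conditional mutual information (strong subadditivity) yields
\[ I(X;B_0|B)_\eta \;\le\; I(XB;B_0)_\eta \;\le\; I(XYB;B_0)_\eta \;=\; I(XY;B_0)_\eta + I(B;B_0|XY)_\eta \;=\; I(XY;B_0)_\eta, \]
where the first inequality discards $I(B;B_0)_\eta\ge 0$ and the second discards $I(Y;B_0|XB)_\eta\ge 0$. Finally, $\eta^{XYB_0}$ is the classical–quantum state obtained by feeding $\Phi_0$ the input ensemble $\{(p_xq_{x,y},\alpha_{x,y})\}$, so $I(XY;B_0)_\eta\le\chi(\Phi_0)$ directly from the definition of the Holevo information as a supremum over such ensembles. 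This closes the argument.

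I expect the only substantive step to be the identity $I(B;B_0|XY)_\eta=0$, i.e.\ recognizing that separability of each $\rho_x$ lets one store the separable decomposition in a register $Y$ across which, jointly with $X$, the systems $B_0$ and $B$ decouple; the remainder is routine bookkeeping with the chain rule. The one point requiring slight care is that $Y$ carries no a priori dimension bound, but since the argument uses only non‑negativity of entropic quantities and the definition of $\chi$, finiteness of any fixed separable decomposition suffices, so I anticipate no real obstacle.
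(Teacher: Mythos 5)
Your proof is correct and follows essentially the same route as the paper's: both record the separable decomposition of each $\rho_x$ in a classical register, use data processing / strong subadditivity to replace conditioning on $B$ by conditioning on that register, and recognize the resulting expression $S(B_0)-\sum p_xq_{x,y}S(\Phi_0(\alpha_{x,y}))$ as a Holevo quantity for $\Phi_0$. The only difference is cosmetic: you phrase the bookkeeping via the chain rule for mutual information, while the paper phrases it via conditional-entropy inequalities and an explicit classical-to-quantum channel.
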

\begin{proof} By separability, for each $x$, we may write
\begin{align*}
\eta_x = \sum_{j} p_{x,j} \eta_{x,j}^{B_0} \otimes \eta_{x,j}^{B} \pl
\end{align*}
as a convex combination of product states. Define a classical to quantum channel  $\Phi^{cq}: X \rightarrow B$, where $X$ is an extra classical system, by
\begin{align*}
\Phi^{cq} ( \ket{x,j}\bra{x,j}) = \eta_{x,j}^{B}\pl.
\end{align*}
From this, we define the classical-quantum states
\begin{align*}
\eta'_x = \sum_{j} p_{x,j}\eta_{x,j}^{B_0} \otimes \ket{x,j} \bra{x,j}
\end{align*}
and observe that $\eta_x = \Phi^{cq}(\eta_x')$ for each $x$.
Applying the data processing inequality for conditional entropy,
\begin{align*}
S(B_0 | B)_{\eta} - \sum_x p_x S(B_0 | B)_{\eta_x} &\le S(B_0)_{\eta} - \sum_x p_x S(B_0 | B)_{\eta_x}
\\ &\le S(B_0)_{\eta} - \sum_x p_x S(B_0 | X)_{\eta_x'}
\\&\le S(B_0)_{\eta} - \sum_x p_xp_{x,j} S(B_0 )_{\eta_{x,j}}\pl.
\end{align*}
This has the form of the Holevo information for the state
$\sum_{x,j}p_xp_{x,j}\ketbra{x,j}\ten\eta_{x,j}$. Therefore, it is less than the Holevo information of $\Phi_0$.
\end{proof}

The next lemma replaces each $\Psi$ channel by its potential capacity in an upper bound, which reduces the discussion of Theorem \ref{thm:holevostrongadd} to heralded channels.
\begin{lemma}
\label{lem:potrep}
Let $Z_{k_i}(\bigphi_i; \bigpsi_i), 1\le i\le m$ be a family of flagged switch channels and let $\Phi_0$ be an arbitrary channel $\Phi_0: A_0 \to B_0$. Then
\begin{align*}
\chi(\Phi_0 \otimes_{i=1}^m Z_{k_i}(\bigphi_i; \bigpsi_i))  \leq
 \chi(\Phi_0 \otimes_{i=1}^m Z_{k_i}(\bigphi_i)) + \sum_{i=1}^m (1-\frac{k_i}{n_i}) \sum_{j=1}^{k_i} \chi^{(pot)}(\Psi_{i,j}) \, .
\end{align*}
\end{lemma}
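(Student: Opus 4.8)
The strategy is: fix an optimal classical--quantum input for the left-hand channel, use the classical block structure of the heralding registers to reduce to a fixed choice of ``$\Phi$'' versus ``$\Psi$'' positions, strip off the $\Psi$-outputs one at a time against their potential Holevo capacities, and finally average over the (uniform) choice of positions. Concretely, pick a classical--quantum state $\rho^{XA_0A}$, $A=\ten_{i,j}A_{i,j}$, attaining $\chi(\Phi_0\ten_{i}Z_{k_i}(\bigphi_i;\bigpsi_i))$, and let $Y$ be the heralding register; its value is a tuple $R=(R_1,\dots,R_m)$ with $R_i\subseteq[n_i]$, $|R_i|=k_i$, uniform and independent of the input. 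Writing $\sigma_R=(\Phi_0\ten_i(\Phi_i^{R_i}\ten\Psi_i^{R_i^c}))(\rho)$ for the output conditioned on $Y=R$, and using that $Y$ is classical and input-independent,
\begin{align*}
\chi(\Phi_0\ten_{i}Z_{k_i}(\bigphi_i;\bigpsi_i))=\frac{1}{\prod_i\binom{n_i}{k_i}}\sum_{R}I(X;B_0B)_{\sigma_R}\pl,
\end{align*}
and for each $R$ we split $B=B_R\ten B_{R^c}$ (the $\Phi$-outputs resp.\ the $\Psi$-outputs) and apply the chain rule $I(X;B_0B)_{\sigma_R}=I(X;B_0B_R)_{\sigma_R}+I(X;B_{R^c}\,|\,B_0B_R)_{\sigma_R}$.

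The leading term is the one we want to recognise as the Holevo information of the heralded channel. Since the $\Psi$'s act only on $B_{R^c}$, the state $\sigma_R$ agrees with $(\Phi_0\ten_i\Phi_i^{R_i})(\rho)$ on $XB_0B_R$; appending the fixed product output of the trivial channels $\canchan$ on the remaining positions changes no mutual information, so $I(X;B_0B_R)_{\sigma_R}$ equals $I(X;B_0B)$ for $\Phi_0\ten_i Z_{k_i}(\bigphi_i)$ conditioned on $Y=R$ and evaluated on the \emph{same} $\rho$. Averaging over $R$ therefore recovers $I(X;B_0BY)$ for $\Phi_0\ten_i Z_{k_i}(\bigphi_i)$ on $\rho$, which is at most $\chi(\Phi_0\ten_i Z_{k_i}(\bigphi_i))$. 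Keeping $\rho$ fixed across all $R$ is essential here: re-optimising the input separately for each $R$ would produce the reverse inequality.

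For the correction term, enumerate the $\Psi$-positions in $R^c$ and remove their outputs one at a time via the chain rule; each step contributes a term of the form $I(X;B_{i,j}\,|\,C)$, where $B_{i,j}$ is the output of $\Psi_{i,j}$ and $C$ is the quantum system assembled from $B_0$, the $\Phi$-outputs, and the already-retained $\Psi$-outputs (all obtained by applying channels to input slots disjoint from $A_{i,j}$). The key inequality is that for any classical--quantum state with classical $X$ and quantum reference $C$ one has $I(X;B'\,|\,C)_{(\mathrm{id}\ten\Psi)(\cdot)}\le\chi^{(pot)}(\Psi)$ --- the conditional-mutual-information reading of the potential Holevo capacity (\cite{winter_potential_2016}); ordinary $\chi$ does not suffice, since it cannot absorb the quantum side information $C$. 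Hence $I(X;B_{R^c}\,|\,B_0B_R)_{\sigma_R}\le\sum_{i}\sum_{j\in R_i^c}\chi^{(pot)}(\Psi_{i,j})$. Averaging over $R$ and using that a fixed $j$ lies in $R_i^c$ for a fraction $\binom{n_i-1}{k_i}/\binom{n_i}{k_i}=1-k_i/n_i$ of the subsets, the double sum collapses to $\sum_i(1-\tfrac{k_i}{n_i})\sum_{j=1}^{n_i}\chi^{(pot)}(\Psi_{i,j})$, which together with the leading term gives the asserted bound.

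The main obstacle is exactly the conditional estimate $I(X;B'|C)\le\chi^{(pot)}(\Psi)$ with a \emph{quantum} reference $C$: this is the point that forces potential (rather than ordinary) Holevo capacity into the statement, and the one place where we lean on the machinery of \cite{winter_potential_2016} instead of elementary entropy inequalities. A variant that avoids invoking that characterisation in full is to replace the $\Psi_{i,j}$ by trivial channels one at a time --- taking the optimal input of the current channel at each step, so that on each turn the inequality is applied only in its single-channel form. Everything else --- the block decomposition over the heralding registers, the chain rule, the identification of the leading term, and the combinatorial average --- is routine bookkeeping.
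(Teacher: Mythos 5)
Your proposal is correct and follows essentially the same route as the paper's proof: decompose the Holevo information over the classical heralding signal $R$ (using that the $Y$-marginal is independent of the input), bound $I(X;B)_{\omega(R)}$ by the corresponding heralded-channel term plus $\sum_{j\in R_i^c}\chi^{(pot)}(\Psi_{i,j})$, and average over $R$ to produce the factor $1-k_i/n_i$. Your write-up is in fact more explicit than the paper's at the one delicate point --- the chain-rule peeling and the conditional bound $I(X;B'\,|\,C)\le\chi^{(pot)}(\Psi)$ with quantum side information $C$, which the paper invokes implicitly --- and you correctly read the upper summation limit as $n_i$ rather than the $k_i$ appearing in the lemma's display.
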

\begin{proof} Let $B=B_0\ten (\ten_{i=1}^{m}\ten_{j=1}^{n_i}B_{j,i})$ be the full quantum output system. For an classical quantum input state $\rho^{XA}=\sum_{x}p_x\ketbra{x}\ten \rho_x^A$, the output state is
\begin{align*}
\omega^{XBY}&=\sum_{x}p_x\ketbra{x}\ten (\ten_{i=1}^m Z_{k_i}(\bigphi_i;\bigpsi_i)(\rho_x))\\&=\sum_x \sum_{R=R_1\cdots R_m , |R_i|=k_i} \frac{1}{\prod \binom{n_i}{k_i}}p_x\ketbra{x}^X\ten(\Phi^{R}\ten \Psi^{R^c})(\rho_x) \ten \ketbra{R}^Y\pl.
\end{align*}
Note that the marginal distributions (reduced density) on the two classical system $XY$ is independent. Thus we have
\begin{align*}
\chi(\Phi_0  \otimes_{i=1}^m Z_{k_i} (\bigphi_i ; \bigpsi_i))
	= \sup_{\rho^{XA}}I(X:BY)_\omega= \sup_{\rho^{XA}}\frac{1}{\prod \binom{n_i}{k_i}}\sum_{R} I(X:B)_{\omega(R)}\pl,
\end{align*}
where
\[\omega(R) =\sum_x p_x\ketbra{x}^X\ten(\Phi^{R}\ten \Psi^{R^c})(\rho_x)\]
is the outcome of the heralding signal $R$. In each $I(X:B)_{\omega(R)}$, we could then replace $R^c$ systems by the potential capacities of $\Psi$ channels without decreasing the expression. That is
\[I(X:B)_{\omega(R)}\le I(X:B)_{\omega(R)'}+\sum_{j\in R_i}\chi^{(pot)}(\Psi_{i,j}) \pl,\]
where
\[\omega(R)'=\sum_x p_x\ketbra{x}^X\ten(\Phi^{R}\ten \Theta^{R^c})(\rho_x)\]
is the corresponding output of heralded channels. The result follows from summing over all $R$ with uniform probabilities.
\end{proof}

\begin{proof}[{\bf Proof of Theorem \ref{thm:holevostrongadd}}]
First, by Lemma \ref{lem:potrep} it is sufficient to estimate $\chi(\Phi_0  \otimes_{i=1}^m Z_{k_i} (\bigphi_i))$. Note that for a classical-quantum state $\omega^{XB_0B}=\sum_{x}p_x\ketbra{x}\ten \omega_x^{B_0B}$
\begin{align*}
 I(X ; B_0 B )_\omega &= S(B_0B)_\omega- \sum_x p_xS(B_0B)_{\omega_x}\\
&=
\big(S(B_0|B)_\omega- \sum_x p_xS(B_0|B)_{\omega_x}\big)+ \big(S(B)_\omega- \sum_x p_xS(B)_{\omega_x}\big) \pl.
\end{align*}
The second half is majorized by the Holevo information of the tensor product of heralded channels if \[\omega^{XB_0B}=id_X\ten \Phi_0\otimes_{i=1}^m Z_{k_i} (\bigphi_i) (\rho^{XA_0A})\]
as an output state. For the first half, we use the factorization property
\[ \Phi_0 \otimes_{i=1}^m Z_{k_i}(\bigphi_i))=\big( \Phi_0 \ten (\ten_{i,j} \Phi_{i,j})\ten id_Y \big) \circ (id_{A_0} \ten_{i=1}^m Z_{k_i}^{n_i}(id:\Theta)) \pl,\]
where ``$id$'' represents the identity channel and  $Z_{k_i}^{n_i}(id:\Theta)$ is the erasure channel. For each $\rho_x$, there exists state $\eta_x^{A_0AY}$ separable between $A_0$ and $AY$ such that
\[ \norm{id \ten_{i=1}^m Z_{k_i}^{n_i}(id:\Theta)(\rho_x)- \eta_x}{1}\le 3.1 |B_0| \sqrt[4]{\flambda S(B_0)_{\rho_x}}\pl.\]
Denote $\delta=3.1 |B_0| \sqrt[4]{\flambda \log|B_0|}$. Thus,
\begin{align*}
&|S(\Phi_0 | \otimes_{i=1}^m Z_{k_i} (\bigphi_i))_{\rho_x}-S(\Phi_0 | (\ten_{i,j} \Phi_{i,j})\ten id_Y )_{\eta_x}|\le \delta\log \frac{4|B_0|}{\delta} \pl, \\
&|S(\Phi_0 | \otimes_{i=1}^m Z_{k_i} (\bigphi_i))_{\rho}-S(\Phi_0 | (\ten_{i,j} \Phi_{i,j})\ten id_Y )_{\eta}|\le \delta\log \frac{4|B_0|}{\delta}\pl.
\end{align*}
Therefore, by Lemma \ref{lem:holevopart} and the triangle inequality
\begin{align*}
&S(B_0|B)_\omega- \sum_x p_xS(B_0|B)_{\omega_x}\\\le &S(\Phi_0|(\ten_{i,j} \Phi_{i,j})\ten id_Y )_\eta- \sum_x p_xS(\Phi_0|(\ten_{i,j} \Phi_{i,j})\ten id_Y )_{\eta_x}+ 2 \delta\log \frac{4|B_0|}{\delta}  \\ \le &\chi(\Phi_0)+2 \delta\log \frac{4|B_0|}{\delta} \pl.
\end{align*}
Putting these together, we obtain
\begin{align*}
 I(X ; B_0 B )_\omega &=
\big(S(B_0|B)_\omega- \sum_x p_xS(B_0|B)_{\omega_x}\big)+ \big(S(B)_\omega- \sum_x p_xS(B)_{\omega_x}\big)
\\ &\le \chi(\Phi_0)+
\chi(\otimes_{i=1}^m Z_{k_i} (\bigphi_i)) + 2 \delta\log \frac{4|B_0|}{\delta}. 	\qedhere
\end{align*}
\end{proof}
\begin{proof}[{\bf Proof of Corollary \ref{cor:holevoadd}}]
As before, we need only estimate the Holevo information of $\ten_{i=1}^m Z_{k_i}(\bigphi_i)$. We start by the averaging on the $j$th position being a position in the set $R$ for a heralded channel $Z_k(\Phi_1,\cdots,\Phi_n)$. Since each $R$ contains $k$ positions, up to a re-ordering based on the classical signal,
\begin{align*}
Z_k(\Phi_1, ..., \Phi_n ) (\rho) &= \frac{1}{k} \sum_{j=1}^n \frac{1}{\binom{n}{k}} \sum_{ R} \Big (\Phi^{R} \otimes \Theta^{R^c} \Big ) (\rho) \otimes \ketbra{R} \\&=
\frac{1}{n} \sum_{j=1}^n \frac{1}{\binom{n-1}{k-1}} \sum_{j\in R} \Big (\Phi^{R} \otimes \Theta^{R^c} \Big ) (\rho) \otimes \ketbra{R}
\\ &= \frac{1}{n} \sum_{j=1}^n \Phi_j\ten \Big(\frac{1}{\binom{n-1}{k-1}} \sum_{j\notin R, |R|=k-1} \Phi^{R} \otimes \Theta^{R^c} (\rho) \otimes \ketbra{R} \Big )
\\ &= \frac{1}{n} \sum_{j=1}^n \Phi_j\ten Z_{k-1}(\Phi_1,\cdots,\Phi_{j-1},\Phi_{j+1}, \cdots,\Phi_n) (\rho)
\pl.
\end{align*}
Denote $\delta=3.1 |B_0| \sqrt[4]{\flambda \log|B_0|}$. We use the convexity of Holevo information (\cite{gao_capacity_2016}, Proposition 4.3) and Theorem \ref{thm:holevostrongadd},
\begin{align*}
 \chi(\otimes_{i=1}^m Z_{n_i} (\bigphi_i))  = &\chi \bigg(\frac{1}{n_1} \sum_{j}  \Phi_{1,j} \otimes Z_{k_1-1}(\Phi_{1,1},\cdots,\Phi_{1,j-1},\Phi_{1,j+1}, \cdots,\Phi_{n_1,1})  \ten	\otimes_{i=2}^m Z_{n_i} (\bigphi_i)\bigg) \\
	\leq &\frac{1}{n_1} \sum_{j} \chi \bigg ( \Phi_{1,j} \otimes Z_{k_1-1}(\Phi_{1,1},\cdots,\Phi_{1,j-1},\Phi_{1,j+1}, \cdots,\Phi_{n_1,1})  \ten	\otimes_{i=2}^m Z_{k_i} (\bigphi_i) \bigg ) \\
\leq &\frac{1}{n_1} \sum_{j} \chi(\Phi_{1,j}) +2\delta\log \frac{4d}{\delta} \\&+ \chi ( Z_{n-1}(\Phi_{1,1}, ..., \Phi_{1,j-1}, \Phi_{1,j+1}, ..., \Phi_{1,k_1})\ten \otimes_{i=2}^m Z_{k_i} (\bigphi_i) )\big)
\end{align*}
We may repeat this procedure to separate out all $\Phi$ positions in each $Z_{k_i} (\bigphi_i)$. As we replace each $\Phi_{i,j}$ by its Holevo information plus the correction term, we are reducing $n$ and $k$ by the same amount, so $\flambda=1/\floor{\min k_i/n_i}$ does not increase. Thus $2\delta\log \frac{4d}{\delta}$ is a uniform bound for the correction term at all steps.
Therefore
\begin{align*}
\chi(\otimes_{i=1}^m Z_{k_i} (\bigphi_i)) \le \sum_{i=1}^m \frac{k_i}{n_i} \sum_{j=1}^{n_i} \chi(\Phi_{i,j}) + 2(\sum_i k_i)\delta\log \frac{4d}{\delta} \pl.
\end{align*}
Because the correction term does not depends on the number of tensors $m$, regularizing the expression for $Z_{k} (\bigphi)^{\ten m}$ yields the bound for classical capacity.
\end{proof}

\end{subsection}
\end{section}

\begin{section}{Erasure Channels Composed with Other Channels}
\label{sec:related}
We define a \emph{generalized erasure channel} as an erasure channel composed with some channel $\Phi$ in its successful trial,
\begin{align*}
\label{eq:erase}
Z_\lambda(\Phi)(\rho) = \lambda \Phi(\rho)\otimes\ket{0}\bra{0}   + (1 - \lambda) \si\otimes \ket{1}\bra{1}\pl,
\end{align*}
where $\lambda \in [0,1]$ and $\si$ is some fixed state. We call the successes those cases in which $\Phi$ was applied with a classical signal of $\ket{0}\bra{0}$. $Z_\lambda(\Phi)$ can be viewed as a heralded channel in terms of success probability, rather than imposing a definite number of successes. Indeed, tensor products of the erasure channel can be expressed as a probabilistic sum over heralded channels,
\begin{align*}
\big ( Z_\lambda(\Phi_1) \otimes \cdots \otimes Z_\lambda(\Phi_n) \big ) (\rho)&=\sum_{k=0}^n\lambda^k (1 - \lambda)^{n-k}\sum_{|R|=k} (\Phi^R\ten \Theta^{R^c}) (\rho) \ten \ketbra{R}
\\&= \sum_{k=0}^n \binom{n}{k} \lambda^k (1 - \lambda)^{n - k} Z_k(\Phi_1, \cdots, \Phi_n) (\rho) \pl.
\end{align*}
This is easy to see by noting that each erasure channel is a binary heralding with probability $\lambda$, so the overall distribution of successes is binomial. In the following we denote
\[Z_k^n(\Phi)=Z_k(\underbrace{\Phi,\Phi,\cdots,\Phi}_{n})\]
as the heralded failure channel with $n$ identical $\Phi$'s and fixed success number $k$.
\begin{theorem}
\label{thm:eraserherald}
For any $n$ and $\la$,
\begin{align*}
|\chi(Z_\lambda(\Phi)^{\otimes n}) - \chi(Z_{\lfloor\lambda n\rfloor}^n(\Phi) )| \leq
	\big(1+\sqrt{n\la(1-\la)}\big)\chi^{(pot)}(\Phi)\pl.
\end{align*}
In particular, the classical capacity of the erasure channel $Z_\lambda(\Phi)$ can be rewritten by
\begin{align*}
C(Z_\lambda(\Phi)) = \lim_{n \rightarrow \infty} \frac{1}{n} \chi(Z_{\lfloor\lambda n\rfloor}^n(\Phi))\pl.
\end{align*}
\end{theorem}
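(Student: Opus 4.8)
The plan is to exploit the binomial decomposition $Z_\lambda(\Phi)^{\otimes n}(\rho)=\sum_{k=0}^n p_k\,Z_k^n(\Phi)(\rho)$ with $p_k=\binom nk\lambda^k(1-\lambda)^{n-k}$ recorded above, together with the fact that the heralding register records the entire success set $R$, hence in particular $k=|R|$. For a classical–quantum input $\rho^{XA}$ on the $n$ input systems, write $\rho_R$ for the classical–quantum state obtained by applying $\Phi$ at the coordinates in $R$ and discarding the input-independent $\canchan$-outputs, and set $v_k(\rho):=\binom nk^{-1}\sum_{|R|=k}I(X;B_R)_{\rho_R}$. Since $Y$ is independent of the input, a direct computation of $I(X;BY)$ gives $\chi(Z_\lambda(\Phi)^{\otimes n})=\sup_{\rho}\sum_k p_k v_k(\rho)$ and $\chi(Z_{k_0}^n(\Phi))=\sup_\rho v_{k_0}(\rho)$, where $k_0=\lfloor\lambda n\rfloor$; because two suprema of functions differing pointwise by at most $c$ differ by at most $c$, it suffices to show $\big|\sum_k p_k v_k(\rho)-v_{k_0}(\rho)\big|\le(1+\sqrt{n\lambda(1-\lambda)})\,\chi^{(pot)}(\Phi)$ for every $\rho$.

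The core of the proof is a pointwise Lipschitz bound $0\le v_{k+1}(\rho)-v_k(\rho)\le\chi^{(pot)}(\Phi)$ for every input and every $k<n$. Averaging the chain rule $I(X;B_S)_{\rho_S}=I(X;B_{S\setminus\{j\}})_{\rho_S}+I(X;B_j\mid B_{S\setminus\{j\}})_{\rho_S}$ over all $(k+1)$-subsets $S$ and all $j\in S$, and using that $(S,j)\mapsto S\setminus\{j\}$ is $(n-k)$-to-one onto the $k$-subsets, gives
\[v_{k+1}(\rho)=v_k(\rho)+\frac1{(k+1)\binom n{k+1}}\sum_{|S|=k+1}\ \sum_{j\in S}I\big(X;B_j\mid B_{S\setminus\{j\}}\big)_{\rho_S}\,.\]
Each summand is a conditional mutual information, hence nonnegative, which yields the lower bound; and in it $B_j=\Phi(A_j)$ is one fresh $\Phi$-output while $B_{S\setminus\{j\}}=\Phi^{\otimes k}(A_{S\setminus\{j\}})$ is the output of the channel $\Psi:=\Phi^{\otimes k}$ applied to the remaining coordinates. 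The upper bound thus reduces to the following principle for potential Holevo capacity: if $B=\Phi(A)$ is a fresh $\Phi$-output and the conditioning system $E$ is the output of an arbitrary channel applied to (the rest of) a classical–quantum input, then $I(X;B\mid E)\le\chi^{(pot)}(\Phi)$ (a property of $\chi^{(pot)}$, cf.\ \cite{winter_potential_2016}). I expect this to be the main obstacle: the naive comparison of $\chi(\Phi\otimes\Psi)$ and $\chi(\Psi)$ at their respective optimal inputs does not control the single-input quantity $I(X;B\mid E)$, so one has to argue directly from the definition of potential capacity (whereas the monotonicity $v_{k+1}\ge v_k$ is just monotonicity of mutual information under partial trace).

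Granting the Lipschitz bound, telescoping gives $|v_k(\rho)-v_{k_0}(\rho)|\le|k-k_0|\,\chi^{(pot)}(\Phi)$ pointwise, so $\big|\sum_k p_k v_k(\rho)-v_{k_0}(\rho)\big|\le\chi^{(pot)}(\Phi)\sum_k p_k|k-k_0|$, and bounding $\sum_k p_k|k-k_0|\le|\lambda n-\lfloor\lambda n\rfloor|+\mathbb{E}_{k\sim\mathrm{Bin}(n,\lambda)}|k-\lambda n|\le 1+\sqrt{\mathrm{Var}(k)}=1+\sqrt{n\lambda(1-\lambda)}$ (Cauchy–Schwarz and the binomial variance) finishes the inequality. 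For the ``in particular'' claim, the HSW theorem gives $C(Z_\lambda(\Phi))=\lim_n\tfrac1n\chi(Z_\lambda(\Phi)^{\otimes n})$, the limit existing because $\chi$ is superadditive under tensor products (Fekete); dividing the main inequality by $n$ yields $\tfrac1n|\chi(Z_\lambda(\Phi)^{\otimes n})-\chi(Z_{\lfloor\lambda n\rfloor}^n(\Phi))|\le\big(\tfrac1n+\sqrt{\lambda(1-\lambda)/n}\big)\chi^{(pot)}(\Phi)\to0$ (with $\chi^{(pot)}(\Phi)<\infty$ in finite dimension), so $\tfrac1n\chi(Z_{\lfloor\lambda n\rfloor}^n(\Phi))$ converges to the same limit $C(Z_\lambda(\Phi))$.
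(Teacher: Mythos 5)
Your proposal is correct and follows essentially the same route as the paper: the same binomial decomposition of $Z_\lambda(\Phi)^{\otimes n}$ into heralded channels, the same reduction to the pointwise Lipschitz bound $|v_k(\rho)-v_{k_0}(\rho)|\le |k-k_0|\,\chi^{(pot)}(\Phi)$ (which the paper's Lemma \ref{lem:potcapbound} obtains in one shot by averaging over nested subsets $P\subset R$ rather than by telescoping single steps through the chain rule), and the same Cauchy--Schwarz/binomial-variance estimate for $\sum_k p_k|k-k_0|$. The step you rightly flag as the crux --- that $I(X;B_j\mid B_{S\setminus\{j\}})\le\chi^{(pot)}(\Phi)$ for a fresh $\Phi$-output at a \emph{fixed} input, which does not follow from the naive sup-versus-sup definition of potential capacity and needs the conditional characterization from \cite{winter_potential_2016} --- is exactly the step the paper also invokes without further justification in Lemmas \ref{lem:potrep} and \ref{lem:potcapbound}, so your treatment is, if anything, more explicit about the only nontrivial ingredient.
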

We prepare our proof with a lemma.
\begin{lemma}
\label{lem:potcapbound}Let $k_1\le k_2$. For any classical-quantum input $\rho$,
\begin{align*}
\label{eq:potcapbound}
0\le I(X:Z^n_{k_1} (\Phi))_\rho - I(X:  Z^n_{k_2} (\Phi) )_\rho \leq (k_2 - k_1) \chi^{(pot)}(\Phi)\pl.
\end{align*}
\end{lemma}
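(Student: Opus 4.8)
The plan is to telescope one erasure at a time, using the data processing inequality for the lower bound and the potential capacity $\chi^{(pot)}(\Phi)$ for the upper bound. The structural input is that, for $k_1\le k_2$, the channel $Z^n_{k_2}(\Phi)$ is a degradation (output post-processing) of $Z^n_{k_1}(\Phi)$: starting from the heralded output of $Z^n_{k_1}(\Phi)$, one picks a uniformly random subset of $k_2-k_1$ further positions, discards their quantum output and re-prepares the fixed state $\si=\Theta(\cdot)$, and relabels the classical flag $Y$ accordingly. A short counting check, identical to the coarse-graining behind the recursion used in the proof of Corollary \ref{cor:holevoadd}, shows the resulting heralding sets are again uniformly distributed of the correct size, so this post-processing reproduces $Z^n_{k_2}(\Phi)$ exactly.

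The lower bound is then immediate. Each single erasure is a CPTP map acting on the output systems alone, so the data processing inequality for mutual information gives $I(X:Z^n_{k_2}(\Phi))_\rho\le I(X:Z^n_{k_1}(\Phi))_\rho$, that is $I(X:Z^n_{k_1}(\Phi))_\rho-I(X:Z^n_{k_2}(\Phi))_\rho\ge 0$.

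For the upper bound I would decompose the mutual information over the heralding signal exactly as in Lemma \ref{lem:potrep}. Since the flag $Y$ is independent of the input register $X$, $I(X:Z^n_k(\Phi))_\rho$ equals the average over heralded sets $S$ of $I(X:\Phi^{\otimes|S|})_{\rho^S}$, the erased positions contributing nothing because their output is the fixed $\si$. Removing one further position $j$ from $S$ and writing $B_1$ for the output of $\Phi$ at $j$ and $B_2$ for the outputs of the remaining $\Phi$'s, the decrease at fixed input is the conditional mutual information $I(X:B_1|B_2)=I(X:B_1B_2)-I(X:B_2)$. The crucial estimate is $I(X:B_1|B_2)\le\chi^{(pot)}(\Phi)$: viewing $B_2$ together with the flag as the output of a side channel fed by the same input, $I(X:B_1|B_2)$ is exactly the Holevo-information gain from adjoining one use of $\Phi$ to that side channel, which by definition of the potential capacity is at most $\chi^{(pot)}(\Phi)$. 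This is precisely the fixed-input potential-capacity bound established inside the proof of Lemma \ref{lem:potrep}, which I would invoke.

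Telescoping the $k_2-k_1$ single-position erasures and averaging the per-step bounds over the nested, uniformly distributed choices of removed positions yields $0\le I(X:Z^n_{k_1}(\Phi))_\rho-I(X:Z^n_{k_2}(\Phi))_\rho\le(k_2-k_1)\chi^{(pot)}(\Phi)$. I expect the main obstacle to be the per-step estimate $I(X:B_1|B_2)\le\chi^{(pot)}(\Phi)$ at a fixed input: because the mutual information here is not optimized over inputs, one must verify that conditioning on $B_2$ and averaging over which position is erased remains compatible with the side-channel formulation of $\chi^{(pot)}$, which is exactly the mechanism already carried out in Lemma \ref{lem:potrep}.
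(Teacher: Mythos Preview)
There is a genuine direction error. In the paper's convention, $Z^n_k(\Phi)$ applies $\Phi$ at $k$ positions and the trivial channel $\Theta$ at the remaining $n-k$, so larger $k$ means \emph{more} successes and hence more mutual information. With $k_1\le k_2$, the correct degradation is $Z^n_{k_1}=\mathcal{D}\circ Z^n_{k_2}$ (erase $k_2-k_1$ of the successful outputs and relabel), not the reverse. Your construction ``start from $Z^n_{k_1}$ and discard $k_2-k_1$ further positions'' cannot manufacture the missing $\Phi$-outputs from the fixed state $\sigma$, and indeed $Z^n_{k_2}$ is not a post-processing of $Z^n_{k_1}$ for $k_1<k_2$; the inequality $I(X:Z^n_{k_1})_\rho\ge I(X:Z^n_{k_2})_\rho$ you derive is false (take $k_1=0$, $k_2=n$). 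The displayed statement in the paper has the indices transposed---the paper's own proof establishes $I(X:Z^n_{k_2})_\rho\le I(X:Z^n_{k_1})_\rho+(k_2-k_1)\chi^{(pot)}(\Phi)$ and says ``the inequality follows similarly'' for the other side, so the intended bound is $0\le I(X:Z^n_{k_2})_\rho-I(X:Z^n_{k_1})_\rho\le(k_2-k_1)\chi^{(pot)}(\Phi)$, which is what is used (with absolute values) in Theorem~\ref{thm:eraserherald}.

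Once the direction is corrected, your plan is sound and essentially coincides with the paper's argument. The paper uses the same flag decomposition $I(X:Z^n_k(\Phi))_\rho=\binom{n}{k}^{-1}\sum_{|R|=k}I(X:\Phi^R)_\rho$, then for each $k_2$-set $R$ averages over $k_1$-subsets $P\subset R$ and invokes the fixed-input potential-capacity bound $I(X:\Phi^R)_\rho\le I(X:\Phi^P)_\rho+(k_2-k_1)\chi^{(pot)}(\Phi)$ (as in Lemma~\ref{lem:potrep}), followed by the double-counting identity $\binom{n}{k_2}^{-1}\binom{k_2}{k_1}^{-1}\sum_{|R|=k_2}\sum_{P\subset R,|P|=k_1}=\binom{n}{k_1}^{-1}\sum_{|P|=k_1}$. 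Your telescoping one erasure at a time is just the iterated version of this same step; there is no genuinely different idea, only the sign to repair.
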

\begin{proof}
 By the argument in Lemma \ref{lem:holevopart},
\[I(X:Z^n_{k} (\Phi))_\rho=\sum_{k=0}^n \binom{n}{k} I(X:\Phi^R\ten \Theta^{R^c})_\rho = \frac{1}{\binom{n}{k}}\sum_{|R|=k}I(X:\Phi^R)_\rho\pl.\]
Therefore, we have
\begin{align*}
I(X:Z^n_{k_2} (\Phi))_\rho&=\frac{1}{\binom{n}{k_2}}\sum_{|R|=k_2}I(X:\Phi^R)_\rho
\\&=\frac{1}{\binom{n}{k_2}}\sum_{|R|=k_2}\frac{1}{\binom{k_2}{k_1}}\sum_{|P|=k_1,P\subset R}I(X:\Phi^P\ten\Phi^{R/P})_\rho
\\&\le\frac{1}{\binom{n}{k_2}}\sum_{|R|=k_2}\frac{1}{\binom{k_2}{k_1}}\sum_{|P|=k_1,P\subset R}\big(I(X:\Phi^P)_\rho+(k_2-k_1)\chi^{(pot)}(\Phi)\big)
\\&=\frac{1}{\binom{n}{k_1}}\sum_{|P|=k_1}I(X:\Phi^P)_\rho+(k_2-k_1)\chi^{(pot)}(\Phi)
\\&=I(X:Z^n_{k_1} (\Phi))_\rho+(k_2-k_1)\chi^{(pot)}(\Phi)\pl.
\end{align*}
The last step is because each $k_1$-subset $P$ has been counted $\binom{n-k_1}{k_2-k_1}$ times as a subset of some $k_2$-set $R$.
The inequality follows similarly.
\end{proof}
\begin{proof}[{\bf Proof of Theorem \ref{thm:eraserherald}.}]
Let $A$ be the input system of $\Phi$.
Then the two channels $Z_\la(\Phi)^{\ten n}$ and $Z_{\lfloor\la n\rfloor}^n(\Phi)$ have the same input $A^{\ten n}$. Denote $\rho^{XA^n}$ by a classical-quantum state.
By the triangle inequality and Lemma \ref{lem:potcapbound},
\begin{align*}
|\chi(Z_\la(\Phi)^{\ten n})-\chi(Z_{\lfloor\la n\rfloor}^n(\Phi))|&=|\sup_\rho
\sum_{k=0}^n \binom{n}{k} \lambda^k (1 - \lambda)^{ n-k}I(X:Z_k^n(\Phi))-\sup_\rho I(X:Z_{\lfloor\la m\rfloor}^n(\Phi))_\rho| \\
&\le \sup_\rho
|\sum_{k=0}^n \binom{n}{k} \lambda^k (1 - \lambda)^{n-k}I(X:Z_k^n(\Phi)) - I(X:Z_{\lfloor\la m\rfloor}^n(\Phi))_\rho|\\
&\le
\sum_{k=0}^n \binom{n}{k} \lambda^k (1 - \lambda)^{n-k}|\sup_\rho I(X:Z_k^n(\Phi))- I(X:Z_{\lfloor\la m\rfloor}^n(\Phi))_\rho|\\
&\le
\chi^{pot}(\Phi)\sum_{k=0}^n \binom{n}{k} \lambda^k (1 - \lambda)^{n-k}|k-\lfloor \la n\rfloor|\\
&\le
\chi^{pot}(\Phi)\big(1+\sum_{k=0}^n \binom{n}{k} \lambda^k (1 - \lambda)^{n-k}|k- \la n| \big)
\pl.
\end{align*}
Recall that the variance of binomial distribution is $n\la(1-\la)$.
We obtain by H{\"\o}lder inequality that
\begin{align*}&\sum_{k=0}^n \binom{n}{k} \lambda^k (1 - \lambda)^{n-k}|k- \la n|\le (\sum_{k=0}^n \binom{n}{k} \lambda^k (1 - \lambda)^{n-k}|k- \la n|^2)^{1/2}=\sqrt{n\la(1-\la)}\pl.  \qedhere
\end{align*}
\end{proof}
\begin{cor}
\label{cor:erasercap}
Let $d$ be the dimension of the output system for $\Phi$. Then
\begin{align*}
C(Z_\lambda(\Phi)) \leq \lambda \Bigg ( \chi(\Phi) + 6.2 d \sqrt[4]{\lambda \log d}\log \bigg ( \frac{d}{3.1 \sqrt[4]{\lambda \log d}} \bigg )  \Bigg )\pl.
\end{align*}
\end{cor}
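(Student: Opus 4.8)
The plan is to obtain this as a packaging of two results already established above, with no new idea: Theorem~\ref{thm:eraserherald} expresses $C(Z_\lambda(\Phi))$ as a regularized single-letter quantity for the fixed-success-number channels $Z_{\lfloor\lambda n\rfloor}^n(\Phi)$, and Corollary~\ref{cor:holevoadd} already carries out the heralded-averaging estimate for a single heralded channel; the task is to line up parameters and pass to a limit. Concretely, first I would invoke the second conclusion of Theorem~\ref{thm:eraserherald}, $C(Z_\lambda(\Phi))=\lim_{n\to\infty}\tfrac1n\chi\big(Z_{\lfloor\lambda n\rfloor}^n(\Phi)\big)$. For each fixed $n$, the channel $Z_{\lfloor\lambda n\rfloor}^n(\Phi)=Z_k(\Phi,\dots,\Phi)$ with $k=\lfloor\lambda n\rfloor$ is exactly a single heralded channel, i.e.\ a flagged switch channel $Z_k(\bigphi;\bigpsi)$ in the sense of Corollary~\ref{cor:holevoadd} with $m=1$, $\bigphi=(\Phi,\dots,\Phi)$, and all ``$\Psi$'' channels equal to the trivial channel $\canchan$. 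Since a trivial channel satisfies $\chi^{(pot)}(\canchan)=0$ (appending a fixed-output channel to anything leaves its Holevo information unchanged), the $\chi^{(pot)}$-term in Corollary~\ref{cor:holevoadd} drops out, and, writing $\flambda_n:=1/\lfloor n/\lfloor\lambda n\rfloor\rfloor$, that corollary yields
\[
\chi\big(Z_{\lfloor\lambda n\rfloor}^n(\Phi)\big)\le \lfloor\lambda n\rfloor\,\chi(\Phi)+6.2\,d\,\lfloor\lambda n\rfloor\,\sqrt[4]{\flambda_n\log d}\,\log\!\Big(\tfrac{4}{3.1\sqrt[4]{\flambda_n\log d}}\Big),
\]
valid whenever $3.1\,d\sqrt[4]{\flambda_n\log d}\le 2$. (One could instead reach this point without Theorem~\ref{thm:eraserherald}, by expanding $Z_\lambda(\Phi)^{\otimes n}=\sum_{k}\binom nk\lambda^k(1-\lambda)^{n-k}Z_k^n(\Phi)$ and using convexity of $\chi$ over the heralding flag, but invoking Theorem~\ref{thm:eraserherald} is cleaner and avoids the tail of the binomial.)

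Next I would divide by $n$ and let $n\to\infty$. One has $\lfloor\lambda n\rfloor/n\to\lambda$, while $\lfloor\lambda n\rfloor\le\lambda n$ forces $n/\lfloor\lambda n\rfloor\ge 1/\lambda$, so $\flambda_n\le 1/\lfloor 1/\lambda\rfloor$ for every $n$ and $\flambda_n\to 1/\lfloor 1/\lambda\rfloor$. Hence, as long as $\lambda$ is small enough that $3.1\,d\sqrt[4]{(\log d)/\lfloor 1/\lambda\rfloor}\le 2$ --- the $1$-norm faithfulness hypothesis of Corollary~\ref{cor:holevoadd} transcribed for this family, which is implicitly assumed in the present statement --- the displayed inequality holds along the whole sequence, and passing to the limit gives
\[
C(Z_\lambda(\Phi))\le \lambda\,\chi(\Phi)+6.2\,d\,\lambda\,\sqrt[4]{\tfrac{\log d}{\lfloor 1/\lambda\rfloor}}\,\log\!\Big(\tfrac{4}{3.1\sqrt[4]{(\log d)/\lfloor 1/\lambda\rfloor}}\Big).
\]

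Finally I would replace $1/\lfloor 1/\lambda\rfloor$ by $\lambda$ to reach the stated form. Since $\lfloor 1/\lambda\rfloor\ge 1/\lambda-1$, the factor $1/\lfloor 1/\lambda\rfloor$ is at most a fixed multiple of $\lambda$ (at most $2\lambda$ for $\lambda\le\tfrac12$), and the small extra constant this introduces is absorbed by the generous prefactor $6.2$ together with the harmless enlargement of the logarithm's argument from $4$ to $d$; this amounts to the monotonicity of $t\mapsto t\log(c/t)$ on the relevant small-$t$ range, which I will not grind out. I do not foresee a genuine obstacle: all the substance is already in Theorem~\ref{thm:eraserherald} and Corollary~\ref{cor:holevoadd}. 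The two points that need care are this last bookkeeping step --- converting the parameter $\flambda=1/\lfloor n/\lfloor\lambda n\rfloor\rfloor$ that the machinery actually delivers into the clean parameter $\lambda$ of the statement --- and keeping the whole argument inside the regime where the faithfulness hypothesis $3.1\,d\sqrt[4]{\flambda\log d}\le 2$ is in force.
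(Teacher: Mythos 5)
Your proposal is correct and follows exactly the paper's route: the paper's entire ``proof'' of this corollary is the one-line assertion that it follows from Theorem~\ref{thm:eraserherald} and Corollary~\ref{cor:holevoadd}, which is precisely the combination you carry out. You in fact supply more detail than the paper does --- in particular the conversion of $\flambda=1/\lfloor n/\lfloor\lambda n\rfloor\rfloor$ into $\lambda$ and the observation that the hypothesis $3.1\,d\sqrt[4]{\lambda\log d}\le 2$ is tacitly assumed --- and your handling of those points is sound.
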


The above corollary \ref{cor:erasercap} follows from Theorem \ref{thm:eraserherald} and Corollary \ref{cor:holevoadd}. It is the technical version of Theorem \ref{thm:introer}, and confirms our conjecture for classical capacity. Based on this, we propose the following definition.

\begin{definition}
We define the \textit{post-selected classical capacity} of a channel $\Phi$ with success rate $\lambda$ as following,
\begin{align*}
C^{PS}_\lambda(\Phi) := \frac{1}{\lambda} C(Z_\lambda(\Phi))= \lim_{n \rightarrow \infty} \frac{1}{\lambda n} C(Z_\lambda(\Phi)^{\otimes n}) \pl.
\end{align*}
\end{definition}
The post-selected capacity is motivated operationally by the common experimental scenario: we perform an experiment many times with some success rate $\lambda$, but we only wish to consider the instances in which the experiment was run successfully in our results. This is equivalent to projecting the total outcome onto the heralding signal $\ket{0}\bra{0}^Y$ subspace. Its significance is that even though we discard the failures, they still matter for superadditive quantities such as classical capacity.

\end{section}
\begin{section}{Other Applications}
\label{sec:other}
\subsection{Capacities with small entangled blocksize coding}
While we would like to immediately extend our results to quantum and private capacities, these quantities have a more complicated interaction with heralding and erasure. The quantum erasure channel, for instance, has zero quantum capacity if the success probability is less than $1/2$ \cite{bennett_capacities_1997}, subsuming the values for which our near-additivity results would apply. It is possible that the quantum capacity has a different monogamy phenomenon from Holevo information, but that is beyond the scope of this paper. We might instead consider the quantum and private capacities with feedback, but these are much more complicated to estimate with entropy techniques \cite{leung_capacity_2009}.

In practice, the realities of quantum hardware often limit the size of entangled blocks. When the success probability of a heralding process is small compared with the entangled blocksize of inputs, most successes will be the only success in their respective block. In this case, we show that superadditivity is limited by a simpler argument. For example, recent proposals for trapped ion computing schemes mention heralded photonic Bell state analysis as a method for connecting trapped ion arrays, a necessity due to the number-limiting interactions of trapped ions in the same array \cite{brown_co-designing_2016}. The array size restricts the entangled input blocksize for inter-array communication, so lossy optical links may exhibit superadditivity-limiting effects even with 2-way classical assistance. The generation of scalably entangled ``cat" states is at the time of this writing an ongoing effort in experimental quantum information. Realistic calculations of attainable quantum and private rates with today's hardware probably should assume small entangled blocksizes.
\begin{prop}
 Let $\Phi_1,\cdots, \Phi_n$ be a family of quantum channels, and let $\lambda \ll 1/n$. Let $F^{(1)}$ be a function mapping densities to positive real numbers, and define its output on a quantum channel by $F^{(1)}(\Phi) = \max_\rho\{F^{(1)} (\Phi(\rho)) \}$. Assume $F^{(1)}$ is superadditive on quantum channels, additive for separable input states, convex in the input state and channel, and admit a well-defined expression of the form $F^{(pot)}(\Phi) = \max_\Psi \{F^{(1)}(\Phi \otimes \Psi) - F^{(1)}(\Psi)\}$ such that $F^{(pot)}(\canchan) = 0$ if $\canchan$ is a trivial channel for which the output provides no information about the input (see \cite{wilde} and \cite{winter_potential_2016} for more information on different capacities). Then
\begin{align*}
 F^{(1)}(Z_\lambda(\Phi_1) \otimes ... \otimes Z_\lambda(\Phi_n)) -  \lambda \sum_{i=1}^n F^{(1)}(\Phi_i) \leq O \bigg (\lambda^2 \sum_i \Big ( F^{(pot)}(\Phi_i) - F^{(1)}(\Phi_i) \Big ) \bigg )\pl.
\end{align*}
\label{thm:mblock}
\end{prop}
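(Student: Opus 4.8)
The plan is to reduce the tensor product of generalized erasure channels to its binomial mixture over fixed‑success heralded channels (as displayed in Section~\ref{sec:related}), bound $F^{(1)}$ of that mixture termwise by convexity, and then exploit that the potential‑capacity corrections survive only on those terms in which at least two of the $\Phi_i$ are genuinely applied --- an event of probability $O(n^{2}\lambda^{2})$, not $O(n\lambda)$.

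First I would write, exactly as in Section~\ref{sec:related},
\[
Z_\lambda(\Phi_1)\otimes\cdots\otimes Z_\lambda(\Phi_n)
 \;=\; \sum_{R\subseteq[n]} p_R\,\mathcal{C}_R,\qquad
 \mathcal{C}_R(\rho) = (\Phi^{R}\otimes\Theta^{R^c})(\rho)\otimes\ketbra{R},\qquad
 p_R=\lambda^{|R|}(1-\lambda)^{n-|R|},
\]
an identity of super‑operators in which each $\mathcal{C}_R$ applies the corresponding $\Phi_j$'s, applies trivial channels elsewhere, and prepares the flag $\ketbra{R}$. Convexity of $F^{(1)}$ in the channel gives $F^{(1)}(Z_\lambda(\Phi_1)\otimes\cdots\otimes Z_\lambda(\Phi_n))\le\sum_R p_R\,F^{(1)}(\mathcal{C}_R)$. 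The flag register and all the $\Theta$‑factors are outputs of trivial channels, so $F^{(pot)}(\Theta)=0$ --- which in particular gives $F^{(1)}(\Theta\otimes\Psi)\le F^{(1)}(\Psi)$ for every $\Psi$, and which together with superadditivity forces $F^{(1)}(\Theta)=0$ --- allows us to delete them one at a time and conclude $F^{(1)}(\mathcal{C}_R)\le F^{(1)}(\Phi^{R})$, with the all‑trivial term $R=\emptyset$ contributing $F^{(1)}(\Theta^{\otimes n})\le F^{(1)}(\Theta)=0$.

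Next I would peel off the genuine channels one at a time. For $|R|\ge 1$, fixing any $j^{\ast}\in R$, the definition of $F^{(pot)}$ gives $F^{(1)}(\Phi^{R})\le F^{(1)}(\Phi_{j^{\ast}})+\sum_{j\in R\setminus\{j^{\ast}\}}F^{(pot)}(\Phi_j)$. Writing $g_j:=F^{(pot)}(\Phi_j)-F^{(1)}(\Phi_j)\ge 0$, this reads $F^{(1)}(\Phi^{R})\le\sum_{j\in R}F^{(1)}(\Phi_j)+\sum_{j\in R\setminus\{j^{\ast}\}}g_j$, where the correction sum is empty whenever $|R|\le 1$. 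Substituting and interchanging summations, the main part telescopes, $\sum_R p_R\sum_{j\in R}F^{(1)}(\Phi_j)=\sum_j F^{(1)}(\Phi_j)\sum_{R\ni j}p_R=\lambda\sum_j F^{(1)}(\Phi_j)$, while the correction is controlled by $\sum_{|R|\ge 2}p_R\sum_{j\in R}g_j=\sum_j g_j\bigl(\lambda-\lambda(1-\lambda)^{n-1}\bigr)\le(n-1)\lambda^{2}\sum_j g_j$, using $1-(1-\lambda)^{n-1}\le(n-1)\lambda$. Subtracting $\lambda\sum_j F^{(1)}(\Phi_j)$ yields $F^{(1)}(Z_\lambda(\Phi_1)\otimes\cdots\otimes Z_\lambda(\Phi_n))-\lambda\sum_i F^{(1)}(\Phi_i)\le(n-1)\lambda^{2}\sum_i\bigl(F^{(pot)}(\Phi_i)-F^{(1)}(\Phi_i)\bigr)$, which is the claimed $O\bigl(\lambda^{2}\sum_i(F^{(pot)}(\Phi_i)-F^{(1)}(\Phi_i))\bigr)$; the hypothesis $\lambda\ll 1/n$ is precisely the regime in which this improves on the crude linear estimate $\lambda\sum_i(F^{(pot)}(\Phi_i)-F^{(1)}(\Phi_i))$.

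The main obstacle is purely combinatorial: one must route the potential‑capacity corrections so that they attach only to success patterns with $|R|\ge 2$. If instead one peels every factor of $\Phi^{R}$ into a potential capacity, one gets only $\sum_R p_R\sum_{j\in R}F^{(pot)}(\Phi_j)=\lambda\sum_j F^{(pot)}(\Phi_j)$, a correction linear in $\lambda$; keeping one factor per block as a bare $F^{(1)}(\Phi_{j^{\ast}})$ --- so that the singleton patterns $|R|=1$ are reproduced exactly --- is exactly what pushes the excess to order $\lambda^{2}$. The only remaining care is the use of the triviality axioms ($F^{(pot)}(\Theta)=0$, plus superadditivity pinning $F^{(1)}(\Theta)=0$) to verify that the classical flag and all the erasure branches cost nothing in $F^{(1)}$; everything else is bookkeeping with the binomial weights $p_R$.
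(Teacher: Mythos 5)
Your proposal is correct and follows essentially the same route as the paper: decompose the tensor product of erasure channels into the binomial mixture over success patterns $R$, apply convexity termwise, strip the trivial factors and the flag at no cost, peel the genuine channels via $F^{(1)}(\Phi\otimes\Psi)\le F^{(pot)}(\Phi)+F^{(1)}(\Psi)$ while keeping one bare $F^{(1)}(\Phi_{j^\ast})$ per pattern so that corrections attach only to $|R|\ge 2$, and sum the binomial weights to get the $\lambda-\lambda(1-\lambda)^{n-1}=O(n\lambda^{2})$ correction. The only cosmetic differences are that you make the triviality bookkeeping ($F^{(pot)}(\Theta)=0\Rightarrow F^{(1)}(\Theta\otimes\Psi)\le F^{(1)}(\Psi)$) explicit and organize the sum per position $j$ rather than per cardinality $k$, neither of which changes the argument.
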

\begin{proof}If only one success has occurred at position $i$, then
\begin{equation}
\label{eq:pulloneoff}
F^{(1)} (\Theta_1 ... \Phi_i ... \Theta_n) = F^{(1)}(\Phi_i),
\end{equation}
since trivial channels contribute nothing. For some tensor product $\Phi^R\ten \Theta^{R^c}$, define $\Psi_1 ... \Psi_n$ such that $\Phi^R\ten \Theta^{R^c} = \Psi_1 \ten ... \ten \Psi_n$.
\begin{equation}
\label{eq:correc}
\begin{split}
& F^{(1)}(\Psi_1 \ten ... \ten \Psi_n) \leq F^{(pot)}(\Psi_1) + F^{(1)}(\Psi_2 \ten ... \ten \Psi_n) \\ 
& = F^{(1)}(\Psi_1) + F^{(1)}(\Psi_2 \ten ... \ten \Psi_n) + F^{(pot)}(\Psi_1) - F^{(1)}(\Psi_1) \\
... & \leq \sum_{i \in R} F^{(1)}(\Phi_1) + \sum_{i \in R : i \neq |R|} \Big ( F^{(pot)}(\Phi_1) - F^{(1)}(\Phi_1) \Big ),
\end{split}
\end{equation}
where the last inequaltiy is obtained by iterating the first and discarding the $F^{(pot)} (\Theta_i)$ terms that are zero anyway.  Note that the final correction term has one fewer channel correction term than are channels, because equation \ref{eq:pulloneoff} shows that we do not need to add a correction term when only one success is involved. Using equations \ref{eq:correc} and \ref{eq:pulloneoff}, and the convexity of $F^{(1)}$ in the channel,
\begin{align*}& F^{(1)}(Z_\lambda(\Phi_1) \otimes ... \otimes Z_\lambda(\Phi_n)) \le \sum_{k=0}^n\sum_{|R|=k}\la^{k}(1-\la)^{n-k}
F^{(1)}(\Phi^R\ten \Theta^{R^c})
\\  \le &\sum_{k=1}^n\sum_{|R|=k}\la^{k}(1-\la)^{n-k}
\sum_{i\in R}F^{(1)}(\Phi_i)+\sum_{k=2}^n\sum_{|R|=k}\la^{k}(1-\la)^{n-k}
\sum_{i\in R : i \neq |R|}(F^{(pot)}(\Phi_i)-F^{(1)}(\Phi_i))
\\  \le &\sum_{k=1}^n\binom{n}{k}\la^{k}(1-\la)^{n-k}\frac{k}{n}
(\sum_{i=1}^n F^{(1)}(\Phi_i))+\sum_{k=2}^n\binom{n}{k}\la^{k}(1-\la)^{n-k}\frac{k - 1}{n}
\sum_{i=1}^n (F^{(pot)}(\Phi_i)-F^{(1)}(\Phi_i))
\\  \le &\la\sum_{i=1}^n F^{(1)}(\Phi_i)
+\la \sum_{k=2}^n \binom{n-1}{k-1}\la^{k-1}(1-\la)^{n-k}
(\sum_{i=1}^n F^{(pot)}(\Phi_i)-F^{(1)}(\Phi_i))
\\  \le &\la\sum_{i=1}^n F^{(1)}(\Phi_i)
+\la(1-(1-\la)^{n-1})
\sum_{i=1}^n(F^{(pot)}(\Phi_i)-F^{(1)}(\Phi_i))
\\  \le &\la\sum_{i=1}^n F^{(1)}(\Phi_i)
+O\big(\lambda^2(
\sum_{i=1}^n F^{(pot)}(\Phi_i) - F^{(1)}(\Phi_i))\big)\pl.
\end{align*}
\end{proof}
\begin{rem}{\rm The intuition behind proposition \ref{thm:mblock} is that when the erasure probability of an eraser channel is sufficiently high compared to the maximum entangled blocksize of the encoder, it is unlikely that more than one non-trivial channel will appear in any block. This result is not based on entanglement monogamy.}
\end{rem}
\begin{rem}{\rm Proposition \ref{thm:mblock} applies to the block-limited coherent information $Q^{(m)}$, but when $\lambda \ll 1$, the unassisted quantum capacity vanishes anyway due to antidegradability of the erasure channel \cite{bennett_capacities_1997}. Proposition \ref{thm:mblock} applies to the Holevo information $\chi$ as well as the classical capacity with limited entanglement assistance as disussed in \cite{junge_channel_2015}. We conjecture that it will apply to the quantum capacity with classical feedback and 2-way classical communication as discussed in \cite{leung_capacity_2009}, but we cannot yet confirm the necessary properties to apply our results, as we do not have a one-shot entropy expression for these quantities.}
\end{rem}

\subsection{Entanglement monogamy in quantum games}
\label{sec:games}
Going beyond the case of communication, the idea that entanglement monogamy implies bounds on entanglement-based super-additivity also applies to non-local games. A two-player game $G=(A,B,X,Y,\pi,v)$ is played between a referee and two isolated players, Alice and Bob, who communicate only with the referee and not between themselves. The referee chooses a question pair $(x, y)$ according to some probability distribution $\pi$ on the question alphabets $X\times Y$, sending $x$ to Alice and $y$ to Bob. The two players respond with answers $a$ and $b$ respectively from answer sets $A$ and $B$. They win the game if $v(x, y, a, b) = 1$ for the verification function $v: X\times Y\times A\times B \to \{0,1\}$ and lose otherwise. The classical value of the game \[val(G)= \sup_{a_x,b_y } \sum_{x,y,a,b}\pi(x,y) v(a, b, x, y)   \int_\Omega a_x(\omega)b_y(\omega)d \mathbb{P}(\omega)\]
is the maximum winning probability when Alice and Bob are allowed to use optimal deterministic strategies $\sum_{a}a_x(\omega)=\sum_{b}b_y(\omega)=1$ based on some classical correlation (common randomness) $\mathbb{P}(\omega)$. The entangled or quantum value of the game \[val^*(G)= \sup_{\rho,E_x^a,F_{y}^b} \sum_{x,y,a,b}\pi(x,y)  v(a, b, x, y) tr(\rho E_x^a\ten F_y^b)\] allows Alice and Bob to answer the question by performing POVMs  $\sum_{a}E_x^a=1, \sum_{b}F_y^b=1$ with some auxilliary bipartite entangled state $\rho$. It is clear that for all games, $val(G)\le val^*(G)$. A bound on the value with classical states, such as Bell's inequality, can be evidence of quantum non-locality (see \cite{palazuelos_survey_2016} for more information on non-local games).

We consider the application of entanglement monogamy in the following scenario.
Suppose the player Alice has a single system $A$, which can share classical correlation or quantum entanglement with a large number of ``Bob'' players $B_1,\cdots,B_n$ simultaneously. The referee randomly (with equal probability) selects a ``Bob'' player $B_i$ and plays the corresponding bipartite game $G_i=(A,B_i,X_i,Y_i,\pi_i,v_i)$ with Alice and $B_i$. This is analogous to the heralding process, because although the two players $A$ and $B_i$ know the game $G_i$ after $B_i$ is selected by the referee, the strategy and auxilliary resource, classical or quantum, has to be prepared between Alice and $B_1,\cdots,B_n$ before the game is played. In this situation, the non-classicality in entangled games will be bounded by entanglement monogamy if the size of Alice's system is fixed, because the average amount of entanglement Alice's system can have with each $B_i$ is limited by the number of ``Bob'' players.

For $\{G_i\}_{1\le i\le n}$, we define the average entangled value when Alice having an at most $d$-dimensional quantum system as follows,
\begin{align*}
Aval_d^*(\{G_i\})=\sup_{\rho,E_x^a, F_{1,y}^b,\cdots, F_{n,y}^b} \frac{1}{n}\sum_{i=1}^n\sum_{a,b,x,y}\pi_i(x,y) v_i(a, b, x, y)tr(\rho^{AB_i} E_x^a\ten F_{i,y}^b)\pl,
\end{align*}
where for each $x$ and $y$, $E_x^a, F_{1,y}^b,\cdots, F_{n,y}^b$ are POVMs on $A, B_1,\cdots, B_n$ respectively and $\rho^{AB_1\cdots B_n}$ is a multipartite state with $|A|$ is at most $d$. Also, the average classical value is given by
\begin{align*}
&Aval(\{G_i\})= \frac{1}{n}\sum_{i=1}^n val(G_i)\pl.
\end{align*}
because the classical correlation used for different $G_i$ can be combined.

\begin{theorem}
Let $G_1,\cdots, G_n$ be a family of bipartite games that Alice play respectively with the players $B_1, \cdots,B_n$. Then
\begin{align*}
 Aval_d^*(\{G_i\}) -Aval(\{G_i\})\leq n^{-1/4}d(\log d)^{1/4}\pl.
\end{align*}
\end{theorem}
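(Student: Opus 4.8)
The plan is to run the heralded‑averaging idea in its simplest instance, with the single Bob player $B_i$ chosen by the referee playing the role of the ``successes'' (so effectively $k=1$, and no permutation averaging is needed), and with squashed entanglement as the entanglement measure. Fix an arbitrary admissible strategy: a multipartite state $\rho^{AB_1\cdots B_n}$ with $|A|\le d$ together with POVMs $\{E_x^a\}_a$ on $A$ and $\{F_{i,y}^b\}_b$ on each $B_i$. For each game index $i$ introduce the ``game operator''
\[W_i=\sum_{a,b,x,y}\pi_i(x,y)\,v_i(a,b,x,y)\,E_x^a\otimes F_{i,y}^b\,,\]
acting on $A\otimes B_i$, so that the $i$‑th summand inside the supremum defining $Aval_d^*(\{G_i\})$ is $\tr(\rho^{AB_i}W_i)$. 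Since $\sum_{a,b}E_x^a\otimes F_{i,y}^b=1$ for each $(x,y)$, $v_i$ is $\{0,1\}$‑valued, and $\sum_{x,y}\pi_i(x,y)=1$, one checks $0\le W_i\le 1$, hence $\|W_i\|_\infty\le 1$. It then suffices to bound $\tfrac1n\sum_{i=1}^n\tr(\rho^{AB_i}W_i)$ by $Aval(\{G_i\})$ plus the stated error, uniformly over admissible strategies.

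First I would apply monogamy of squashed entanglement \eqref{eq:sqprops} to the $(n+1)$‑partite state $\rho^{AB_1\cdots B_n}$ with distinguished party $A$, giving $\sum_{i=1}^n E_{sq}(A,B_i)_\rho\le S(A)_\rho\le\log d$, so $\tfrac1n\sum_i E_{sq}(A,B_i)_\rho\le\tfrac{\log d}{n}$. Next, for each $i$ the $1$‑norm faithfulness \eqref{faithful} provides a separable state $\si_i^{AB_i}=\sum_\ell p_{i,\ell}\,\si_{i,\ell}^A\otimes\si_{i,\ell}^{B_i}$ with $\|\rho^{AB_i}-\si_i^{AB_i}\|_1\le 3.1\,d\,\sqrt[4]{E_{sq}(A,B_i)_\rho}$. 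The key point is that a separable auxiliary state cannot beat shared randomness: $\tr(\si_i\,E_x^a\otimes F_{i,y}^b)=\sum_\ell p_{i,\ell}\,\tr(\si_{i,\ell}^A E_x^a)\,\tr(\si_{i,\ell}^{B_i}F_{i,y}^b)$ is exactly the outcome statistics of a classical strategy for $G_i$ with common randomness $\ell$, so $\tr(\si_i^{AB_i}W_i)\le val(G_i)$. Combining with the perturbation estimate $|\tr((\rho^{AB_i}-\si_i^{AB_i})W_i)|\le\|\rho^{AB_i}-\si_i^{AB_i}\|_1\,\|W_i\|_\infty$ gives $\tr(\rho^{AB_i}W_i)\le val(G_i)+3.1\,d\,\sqrt[4]{E_{sq}(A,B_i)_\rho}$ for each $i$. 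Averaging over $i$, using concavity of $t\mapsto t^{1/4}$ (Jensen) and the monogamy bound,
\[\tfrac1n\sum_{i=1}^n\tr(\rho^{AB_i}W_i)\le Aval(\{G_i\})+3.1\,d\Big(\tfrac1n\sum_{i=1}^nE_{sq}(A,B_i)_\rho\Big)^{1/4}\le Aval(\{G_i\})+3.1\,d\,(\log d)^{1/4}\,n^{-1/4}\,,\]
and taking the supremum over admissible strategies yields the theorem (with the explicit constant $3.1$ coming from \eqref{faithful}, which one may absorb).

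I do not expect a serious obstacle: the only genuine content is the interplay of monogamy \eqref{eq:sqprops} and faithfulness \eqref{faithful} — monogamy forces the $E_{sq}(A,B_i)_\rho$ to be $O(\log d/n)$ on average, faithfulness turns this into an $O(d(\log d/n)^{1/4})$ trace‑norm approximation by separable states, and the $1$‑Lipschitz game functionals $\rho\mapsto\tr(\rho W_i)$ cannot see the difference — exactly as in Lemma \ref{thm:entdil} and Theorem \ref{thm:entblockperm}, but in the especially clean $k=1$ setting. The steps needing a little care are routine: verifying $0\le W_i\le 1$ so the $1$‑norm perturbation bound applies with constant $1$; checking that the measurement statistics of a separable state genuinely constitute a feasible classical (local) strategy, i.e.\ that the classical value of $G_i$ over separable resources is $val(G_i)$; and noting that the separable approximant $\si_i$ is permitted to depend on $i$ since we only ever average over $i$. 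A variant routed through convexity \eqref{convex} instead of a per‑strategy argument would also work, but the above is the shortest route.
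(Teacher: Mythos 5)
Your proof is correct and follows essentially the same route as the paper's: monogamy of squashed entanglement bounds the average $E_{sq}(A,B_i)_\rho$ by $\log d/n$, faithfulness replaces $\rho^{AB_i}$ by a separable state whose game value is at most $val(G_i)$, a norm-one perturbation bound controls the difference, and a concavity/H\"older step averages the fourth roots. The only cosmetic differences are your use of the game operator $W_i$ in place of the paper's linear functional $l_{G_i}$ and Jensen in place of H\"older (the same inequality here); both arguments also carry the same factor $3.1$ that the stated bound silently absorbs.
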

\begin{proof}Let $G=(A,B,X,Y,\pi,v)$ be a bipartite game. For fixed axillary systems $\A,\B$ and POVMs $E_x^a,F_y^b$, the value function can be viewed as a positive linear functional $l_G$ on the trace-class $S_1(\A\ten \B)$,
\[l_G(\rho^{\A\B})=\sum_{x,y,a,b}\pi(x,y)  v(a, b, x, y) tr(\rho E_x^a\ten F_y^b)\pl.\]
Note that $l_G$ is of norm at most $1$. Then for a separable $\sigma$ and an arbitrary $\rho$,
\begin{align*}
l_G(\rho) \leq l_G(\rho - \sigma)  + l_G(\sigma)
\leq  \| \rho - \sigma \|_1 + val(G) \pl.
\end{align*}
Now suppose that the axillary quantum system $\A$ of Alice is of dimension at most $d$. We know by the monogamy of entanglement \eqref{eq:sqprops} that for any $\rho^{AB_1\cdots B_n}$,
\begin{align*}
\label{eq:gamemon}
\frac{1}{n} \sum_{i=1}^n E_{sq}(A, B_i)_\rho \leq \frac{1}{n} S(A)_\rho \leq \frac{\log d}{n}\pl.
\end{align*}
It follows from the faithfulness of squashed entanglement that there exists a state $\sigma^{A B_i}$ separable on $A$ and $B_i$ such that
\begin{align*}
\| \rho^{A B_i} - \sigma^{A B_i} \|_1 \leq 3.1 d \sqrt[4]{E_{sq}(A,B_i)_\rho}
\end{align*}
Thus,
\begin{align*}
l_{G_i}(\rho^{AB_i})\leq 3.1 d \sqrt[4]{E_{sq}(A,B_i)_\rho} + val(G_i) \pl,
\end{align*}
and then the average entangled value obeys
\begin{align*}
Aval_d^*(\{G_i\}) \leq \frac{1}{n}\sum_{i=1}^n\big(3.1 d \sqrt[4]{E_{sq}(A,B_i)_\rho} + val(G_i)\big)
	\leq Aval(\{G_i\}) + 3.1 d\frac{1}{n}\sum_{i=1}^n\sqrt[4]{E_{sq}(A,B_i)_\rho}\pl.
\end{align*}
By H{\"o}lder's inequality, we have
\begin{align*}
\sum_{i=1}^n\sqrt[4]{E_{sq}(A,B_i)_\rho}\le n^{3/4}(\sum_{i=1}^n E_{sq}(A,B_i)_\rho)^{1/4}\le n^{3/4}(\log d)^{1/4}\pl.
\end{align*}
Therefore,
\begin{align*}
Aval_d^*(\{G_i\}) -Aval(\{G_i\})\leq n^{-1/4}d(\log d)^{1/4}\pl.
\end{align*}
\end{proof}
We can see that with large $n$ and fixed $d$, the classicality violation decays to $0$. In keeping with the results of \cite{junge_large_2011}, however, increasing the dimension of Alice and Bobs' systems simultaneously with the number of measurements can compensate for a decline in entanglement. For this reason, we do not expect to a bound that is independent of the dimension of Alice's system.

This simple application in quantum games may give some intuition behind the phenomenon in Theorem \ref{thm:mblock}, and possibly those in Section \ref{sec:holevothms}. The existence of a monogamous, faithful entanglement measure (like squashed entanglement) necessarily constrains entanglement-dependent quantities to have a sort of monogamy of their own. This technique can be applied to any situation in which one seeks entanglement-dependent effects in the presence of a heralding-like process.

\end{section}
\begin{section}{Discussion and Conclusion}
\label{sec:conclusion}
It follows intuitively that the monogamy of a faithful entanglement measure would also imply monogamy for entanglement-dependent quantities. Recent advances in the theory of squashed entanglement \cite{li_squashed_2014} have combined these properties in a sufficiently quantitative way as to directly derive bounds on entropy expressions. As a property of entangled states, we expect superadditivity of channel capacity to be monogamous. If many entangled systems enhance the capacity of a single channel, monogamy would intuitively imply that the enhancement due to each is correspondingly limited. While this alone might be an interesting anecdote and of potential interest to routines designed to optimize quantum input states, the heralded channel provides a direct operational application of the monogamy of superadditivity. When an arbitrary channel is randomly distributed amongst a larger number of copies of a strongly additive channel, entanglement in the input state goes to waste, and the combined channel loses superadditivity.

One obvious application is to situations encountered in experimental physics, such as photonics, where the non-trivial channel created by decoherence combines with the erasure channel arising from photon loss. Due to the difficulty of creating and maintaining large-scale entanglement in transmitted photons, and the likelihood that entangling inputs with eventually destroyed copies sacrifices some capacity, many realistic quantum information systems will probably transmit in the nearly-additive realm of Theorem \ref{thm:mblock} and not gain much by attempting to exploit superadditivity.

Finally, we note that this technique is not limited to capacity or entropy expressions. Any quantity that depends on entanglement and is faithful with the trace norm may show a monogamy-like effect due to comparison with squashed entanglement. This broader implication of entanglement monogamy bounding entanglement-dependent quantities may also be significant.
\end{section}
\begin{section}{Acknowledgements}
We thank Mark Wilde and Debbie Leung for helpful conversations. We thank Paul Kwiat and the Kwiat lab at UIUC for conversations and experiences that gave us a sense of the importance of heralded channels in practical applications.
\end{section}

\bibliography{monogamy}

\begin{thebibliography}{10}

\bibitem{shannon_mathematical_2001}
C.~E. Shannon.
\newblock A {Mathematical} {Theory} of {Communication}.
\newblock {\em SIGMOBILE Mob. Comput. Commun. Rev.}, 5(1):3--55, January 2001.

\bibitem{wilde}
M.~M. Wilde.
\newblock {\em Quantum information theory}.
\newblock Cambridge University Press, 2013.

\bibitem{hastings_superadditivity_2009}
M.~B. Hastings.
\newblock Superadditivity of communication capacity using entangled inputs.
\newblock {\em Nature Physics}, 5(4):255--257, April 2009.

\bibitem{smith_quantum_2008}
G.~Smith and J.~Yard.
\newblock Quantum {Communication} with {Zero}-{Capacity} {Channels}.
\newblock {\em Science}, 321(5897):1812--1815, September 2008.

\bibitem{cubitt_unbounded_2015}
T.~Cubitt, D.~Elkouss, W.~Matthews, M.~Ozols, D.~Pérez-García, and
  S.~Strelchuk.
\newblock Unbounded number of channel uses may be required to detect quantum
  capacity.
\newblock {\em Nature Communications}, 6:ncomms7739, March 2015.

\bibitem{li_private_2009}
K.~Li, A.~Winter, Z.~XuBo, and G.~GuangCan.
\newblock Private {Capacity} of {Quantum} {Channels} is {Not} {Additive}.
\newblock {\em Physical Review Letters}, 103(12), 2009.

\bibitem{shor_quantum_2009}
P.~W. Shor.
\newblock Quantum information theory: {The} bits don't add up.
\newblock {\em Nature Physics}, 5(4):247--248, April 2009.

\bibitem{holevo_capacity_1998}
A.~S. Holevo.
\newblock The capacity of the quantum channel with general signal states.
\newblock {\em IEEE Transactions on Information Theory}, 44(1):269--273,
  January 1998.

\bibitem{schumacher_sending_1997}
B.~Schumacher and M.~D. Westmoreland.
\newblock Sending classical information via noisy quantum channels.
\newblock {\em Physical Review A}, 56(1):131--138, 1997.

\bibitem{tomamichel_tight_2012}
M.~Tomamichel, C.~C.~W. Lim, N.~Gisin, and R.~Renner.
\newblock Tight finite-key analysis for quantum cryptography.
\newblock {\em Nature Communications}, 3:634, January 2012.

\bibitem{pisenti_distinguishability_2011}
N.~Pisenti, C.~P.~E. Gaebler, and T.~W. Lynn.
\newblock Distinguishability of hyperentangled {Bell} states by linear
  evolution and local projective measurement.
\newblock {\em Physical Review A}, 84(2):022340, August 2011.

\bibitem{calsamiglia_maximum_2014}
J.~Calsamiglia and N.~L\"utkenhaus.
\newblock Maximum efficiency of a linear-optical {Bell}-state analyzer.
\newblock {\em Applied Physics B}, 72(1):67--71, April 2014.

\bibitem{knill_scheme_2001}
E.~Knill, R.~Laflamme, and G.~J. Milburn.
\newblock A scheme for efficient quantum computation with linear optics.
\newblock {\em Nature}, 409(6816):46--52, January 2001.

\bibitem{northup_quantum_2014}
T.~E. Northup and R.~Blatt.
\newblock Quantum information transfer using photons.
\newblock {\em Nature Photonics}, 8(5):356--363, May 2014.

\bibitem{hofmann_heralded_2012}
J.~Hofmann, M.~Krug, N.~Ortegel, L.~Gérard, M.~Weber, W.~Rosenfeld, and
  H.~Weinfurter.
\newblock Heralded {Entanglement} {Between} {Widely} {Separated} {Atoms}.
\newblock {\em Science}, 337(6090):72--75, July 2012.

\bibitem{bernien_heralded_2013}
H.~Bernien, B.~Hensen, W.~Pfaff, G.~Koolstra, M.~S. Blok, L.~Robledo, T.~H.
  Taminiau, M.~Markham, D.~J. Twitchen, L.~Childress, and R.~Hanson.
\newblock Heralded entanglement between solid-state qubits separated by three
  metres.
\newblock {\em Nature}, 497(7447):86--90, May 2013.

\bibitem{usmani_heralded_2012}
I.~Usmani, C.~Clausen, F.~Bussières, N.~Sangouard, M.~Afzelius, and N.~Gisin.
\newblock Heralded quantum entanglement between two crystals.
\newblock {\em Nature Photonics}, 6(4):234--237, April 2012.

\bibitem{maunz_heralded_2009}
P.~Maunz.
\newblock Heralded {Quantum} {Gate} between {Remote} {Quantum} {Memories}.
\newblock {\em Physical Review Letters}, 102(25), 2009.

\bibitem{elkouss_nonconvexity_2016}
D.~Elkouss.
\newblock Nonconvexity of private capacity and classical environment-assisted
  capacity of a quantum channel.
\newblock {\em Physical Review A}, 94(4), 2016.

\bibitem{fuchs_nonorthogonal_1997}
C.~A. Fuchs.
\newblock Nonorthogonal {Quantum} {States} {Maximize} {Classical} {Information}
  {Capacity}.
\newblock {\em Physical Review Letters}, 79(6):1162--1165, 1997.

\bibitem{marvian_symmetry_2012}
I.~Mashhad.
\newblock Symmetry, asymmetry and quantum information.
\newblock {\em PhD thesis, University of Waterloo}, 2012.

\bibitem{brandao_when_2012}
F.~G. S.~L. Brandão, J.~Oppenheim, and S.~Strelchuk.
\newblock When {Does} {Noise} {Increase} the {Quantum} {Capacity}?
\newblock {\em Physical Review Letters}, 108(4), 2012.

\bibitem{bjelakovic_quantum_2008}
I.~Bjelakovic, H.~Boche, and J.~Noetzel.
\newblock On {Quantum} {Capacity} of {Compound} {Channels}.
\newblock {\em Physical Review A}, 78(4), October 2008.
\newblock arXiv: 0808.1007.

\bibitem{bjelakovic_classical_2007}
I.~Bjelakovic and H.~Boche.
\newblock Classical {Capacities} of {Averaged} and {Compound} {Quantum}
  {Channels}.
\newblock {\em arXiv:0710.3027 [math-ph, physics:quant-ph]}, October 2007.
\newblock arXiv: 0710.3027.

\bibitem{bjelakovic_arbitrarily_2013}
I.~Bjelakovic, H.~Boche, G.~Jan{\ss}en, and J.~N\"otzel.
\newblock Arbitrarily {Varying} and {Compound} {Classical}-{Quantum} {Channels}
  and a {Note} on {Quantum} {Zero}-{Error} {Capacities}.
\newblock {\em Information Theory, Combinatorics, and Search Theory}, pages
  247--283, 2013.

\bibitem{coffman_2000_distributed}
V.~Coffman, J.~Kundu, and W.~K Wootters.
\newblock Distributed entanglement.
\newblock {\em Physical Review A}, 61(5):052306, 2000.

\bibitem{monogamy}
M.~Koashi and A.~Winter.
\newblock Monogamy of quantum entanglement and other correlations.
\newblock {\em Physical Review A}, 69(2):022309, 2004.

\bibitem{osborne_2006_general}
T.~J. Osborne and F.~Verstraete.
\newblock General monogamy inequality for bipartite qubit entanglement.
\newblock {\em Physical review letters}, 96(22):220503, 2006.

\bibitem{li_squashed_2014}
K.~Li and A.~Winter.
\newblock Squashed entanglement, k-extendibility, quantum {Markov} chains, and
  recovery maps.
\newblock {\em arXiv:1410.4184 [math-ph, physics:quant-ph]}, October 2014.
\newblock arXiv: 1410.4184.

\bibitem{christandl_squashed_2004}
M.~Christandl and A.~Winter.
\newblock Squashed entanglement: {An} additive entanglement measure.
\newblock {\em Journal of Mathematical Physics}, 45(3):829--840, March 2004.

\bibitem{gao_capacity_2016}
L.~Gao, M.~Junge, and N.~LaRacuente.
\newblock Capacity {Estimates} for {TRO} {Channels}.
\newblock {\em arXiv:1609.08594 [quant-ph]}, September 2016.
\newblock arXiv: 1609.08594.

\bibitem{leditzky_approaches_2017}
F.~Leditzky, E.~Kaur, N.~Datta, and M.~M. Wilde.
\newblock Approaches for approximate additivity of the {Holevo} information of
  quantum channels.
\newblock {\em arXiv:1709.01111 [quant-ph]}, September 2017.
\newblock arXiv: 1709.01111.

\bibitem{winter_potential_2016}
A.~Winter and D.~Yang.
\newblock Potential {Capacities} of {Quantum} {Channels}.
\newblock {\em IEEE Transactions on Information Theory}, 62(3):1415--1424,
  March 2016.

\bibitem{brandao_faithful_2011}
F.~G. S.~L. Brandão, M.~Christandl, and J.~Yard.
\newblock Faithful {Squashed} {Entanglement}.
\newblock {\em Communications in Mathematical Physics}, 306(3):805--830, August
  2011.

\bibitem{horodecki_quantum_2009}
R.~Horodecki, P.~Horodecki, M.~Horodecki, and K.~Horodecki.
\newblock Quantum entanglement.
\newblock {\em Reviews of Modern Physics}, 81(2):865--942, June 2009.

\bibitem{alicki_continuity_2004}
R.~Alicki and M.~Fannes.
\newblock Continuity of quantum conditional information.
\newblock {\em Journal of Physics A: Mathematical and General}, 37(5):L55,
  2004.

\bibitem{winter_tight_2016}
A.~Winter.
\newblock Tight {Uniform} {Continuity} {Bounds} for {Quantum} {Entropies}:
  {Conditional} {Entropy}, {Relative} {Entropy} {Distance} and {Energy}
  {Constraints}.
\newblock {\em Communications in Mathematical Physics}, 347(1):291--313, March
  2016.

\bibitem{bennett_capacities_1997}
Charles~H. Bennett.
\newblock Capacities of {Quantum} {Erasure} {Channels}.
\newblock {\em Physical Review Letters}, 78(16):3217--3220, 1997.

\bibitem{leung_capacity_2009}
D.~Leung, J.~Lim, and P.~Shor.
\newblock Capacity of {Quantum} {Erasure} {Channel} {Assisted} by {Backwards}
  {Classical} {Communication}.
\newblock {\em Physical Review Letters}, 103(24):240505, December 2009.

\bibitem{brown_co-designing_2016}
K.~R. Brown, J.~Kim, and C.~Monroe.
\newblock Co-{Designing} a {Scalable} {Quantum} {Computer} with {Trapped}
  {Atomic} {Ions}.
\newblock {\em arXiv:1602.02840 [quant-ph]}, February 2016.
\newblock arXiv: 1602.02840.

\bibitem{junge_channel_2015}
M.~Junge and C.~Palazuelos.
\newblock Channel capacities via p-summing norms.
\newblock {\em Advances in Mathematics}, 272:350--398, February 2015.

\bibitem{palazuelos_survey_2016}
C.~Palazuelos and T.~Vidick.
\newblock Survey on nonlocal games and operator space theory.
\newblock {\em Journal of Mathematical Physics}, 57(1):015220, January 2016.

\bibitem{junge_large_2011}
M.~Junge and C.~Palazuelos.
\newblock Large {Violation} of {Bell} {Inequalities} with {Low} {Entanglement}.
\newblock {\em Communications in Mathematical Physics}, 306(3):695, September
  2011.

\end{thebibliography}
\bibliographystyle{unsrt}
\end{document}